\newtheorem{thm}{Theorem}
\newtheorem{cor}{Corollary}
\newtheorem{lem}{Lemma}
\newtheorem{dfn}{Definition}
\newtheorem{sublemma}{Sublemma}
\newtheorem{identity}{Identity}
\def\QED{\mbox{\rule[0pt]{1.5ex}{1.5ex}}}
\def\endproof{\hspace*{\fill}~\QED\par\endtrivlist\unskip\vspace{1ex}}
\begin{document}

\sloppy

\title{Multi-Resolution Video Streaming in Peer-to-Peer Networks}

\author{
\IEEEauthorblockN{
Batuhan Karag\"oz\IEEEauthorrefmark{1},
Semih Yavuz\IEEEauthorrefmark{2},
Tracey Ho\IEEEauthorrefmark{3}, and
Michelle Effros\IEEEauthorrefmark{3}}
\IEEEauthorblockA{\IEEEauthorrefmark{1}%
Department of Computer Engineering,
Middle East Technical University,
Ankara 06800, Turkey}
\IEEEauthorblockA{\IEEEauthorrefmark{2}%
Department of Mathematics,
Bilkent University,
Ankara 06800, Turkey}
\IEEEauthorblockA{\IEEEauthorrefmark{3}%
Department of Electrical Engineering,
California Institute of Technology,
Pasadena, California 91125, USA}
\IEEEauthorblockA{%
\texttt{batu@ceng.metu.edu.tr},
\texttt{y\_semih@ug.bilkent.edu.tr},
\texttt{tho@caltech.edu,}}
\texttt{effros@caltech.edu}

}



\maketitle

\begin{abstract}
We consider multi-resolution streaming in fully-connected peer-to-peer networks, where transmission rates are constrained by arbitrarily specified upload capacities of the source and peers. We fully characterize the capacity region of rate vectors achievable with arbitrary coding, where an achievable rate vector describes a vector of throughputs of the different resolutions that can be supported by the network. We then prove that all rate vectors in the capacity region can be achieved using pure routing strategies. This shows that coding has no capacity advantage over routing in this scenario.
\end{abstract}

\section{Introduction}\label{intro}
We consider  multi-resolution streaming in a heterogeneous peer-to-peer setting, where peers have different upload capacities and demand an information stream at different resolutions. The information stream is layered, such as in Scalable Video Coding~\cite{schwarz07}, which generates a base video layer and a number of enhancement layers that depend on the base layer and all lower layers.

We assume a fully-connected overlay network in which transmission rates are constrained by the upload capacity of the source and each peer, a model introduced in~\cite{mundinger} to capture the most important constraints in peer-to-peer networks. A problem instance is defined by specifying the number of layers demanded by each peer and the upload capacity constraints of the source and each peer. Our goal is to find the capacity region of achievable rate vectors, where an achievable rate vector describes a vector of throughputs of the different resolutions that can be supported by the network.

Solutions can be classified as follows. Inter-session coding solutions are the most general, allowing coding across information from different sessions (i.e.~layers).  Intra-session
coding solutions restrict coding to occur only within each session. Routing solutions allow only replication and forwarding of information at each node. Intra-session coding corresponds to independent multicast network coding for each layer, for which the capacity region is given by a linear program. In contrast, characterizing inter-session coding capacity, which corresponds to the  information theoretic capacity, is open for general networks.

Related work by Chiu et al.~\cite{chiu} studies the special case of a single resolution. That case corresponds to a single multicast, and~\cite{chiu} shows that network coding is not needed to achieve capacity. In~\cite{ponec}, Ponec et al. consider the multi-resolution case restricted to intra-session coding, showing that intra-session coding does not improve the capacity region over routing. A different objective of minimizing average finish times for file download was studied in~\cite{ezovski,chang}.

In this paper we provide a complete characterization of the capacity region of feasible rate vectors  achievable with arbitrary (inter or intra-session) coding, and show that the entire capacity region can be achieved with routing.


\section{Problem Definition}\label{probdef}

  A peer to peer network is modeled as a complete directed graph with a single source node $p_0$ and $k\ge 1$ peer nodes $\lbrace p_0,p_1,\ldots,p_k\rbrace $. The upload capacities of nodes $p_0,\ldots,p_k$ are $C_0,\ldots,C_k$ respectively.

Information originates at the source node and is distributed to the peers, which help the distribution process by uploading information to other peers. Coding may occur at the source and peers.

 Let $n$ be the number of different resolutions in a layered data stream. We denote by $x_1,\ldots,x_n$ the data streams corresponding to the different layers, such that the $j$th resolution corresponds to $\{x_1,x_2,\ldots, x_j\}$. 
 The rate of $x_j$ is denoted by $L_j$. For simplicity, we assume that the upload capacities $C_0,\ldots,C_k$ and the data rates $L_1,\ldots,L_n$ are integers, which can be approached arbitrarily closely by scaling the unit appropriately.

We are given nested sets $X_1,\ldots,X_n$ specifying the demands:
 $$X_j=\lbrace p_i|p_i\ demands\ x_j\rbrace.$$ We also define $X_{n+1}=\lbrace p_0 \rbrace$, so we have
$$\lbrace p_0 \rbrace=X_{n+1}\subseteq X_{n}\subseteq\ldots\subseteq X_1=\lbrace p_0,p_1,\ldots,p_k\rbrace.$$

For all $S_1,S_2\subseteq X_1$, $S_1\rightarrow S_2$ is defined as the set of all links coming from $S_1$ and going to $S_2$. We also write $p_i\rightarrow p_j$ instead of $\lbrace p_i\rbrace\rightarrow\lbrace p_j\rbrace$ for brevity. The constraints on a successful transmission scheme are as follows:

\textit{i)} Each outgoing link of the source $p_0$ is a function of $x_1,\ldots,x_n$:
$$\forall p_i\in X_1,\ H(p_0\rightarrow p_i|x_1,\ldots,x_n)=0.$$

\textit{ii)} Each outgoing link of $p_i\in X_1\setminus\lbrace p_0\rbrace$ is a function of incoming links:
$$\forall p_i,p_j\in X_1\setminus\lbrace p_0\rbrace,\ H(p_i\rightarrow p_j|p_0\rightarrow p_i,\ldots,p_n\rightarrow p_i)=0.$$
We assume that $H(p_i\rightarrow p_i)=0$ without loss of optimality.

\textit{iii)} Each peer $p_i\in X_1$ can transmit at rate at most $C_i$:
$$\forall p_i\in X_1\ C_i\ge \sum_{p_j\in X_1} H(p_i\rightarrow p_j)$$

\textit{iv)} Each peer $p_i\in X_j\setminus\lbrace p_0\rbrace$ is able to decode $x_j$ from its received information:
$$I(X_1\rightarrow p_i;x_j)=H(x_j)=L_j.$$

\section{Approach}
In this section we provide some intuition for our approach. A first observation is that total upload capacity should be greater than the total rate of data which has to be delivered:
	\begin{equation}\label{main}
				\sum_{i=0} C_i \ge \sum_{i=1}^n |X_i|L_i.
	\end{equation}
	This condition is necessary but not sufficient. The following sequence of lemmas leads to a sufficient condition for a rate vector to be achievable.  The proofs of lemmas in this and the next section can be found in the Appendix.	

		\begin{lem}\label{routree}
		Let $k$ and $C_0$ be positive integers and $C_1,C_2,\ldots,C_k$ be nonnegative integers such that
		$$ C_0+\sum_{i=1}^k C_i = k $$
		Then there exists a directed tree rooted at $v_0$ with vertices $v_1,\ldots,v_k$ such that
		$$ outdeg(v_0) = C_0,$$
		$$ \forall i \in \lbrace 1,\ldots,k \rbrace\ outdeg (v_i)=C_i, \ indeg (v_i)=1. $$
		\end{lem}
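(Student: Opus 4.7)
The plan is to prove this by induction on $k$. For the base case $k=1$, the constraints $C_0 \geq 1$ and $C_0 + C_1 = 1$ force $C_0 = 1$, $C_1 = 0$, so the single-edge graph $v_0 \to v_1$ is the desired tree.

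For the inductive step on $k \geq 2$, the first step is to find an index $i \in \{1,\ldots,k\}$ with $C_i = 0$, which will serve as a leaf of the tree being built. Such an index must exist: if instead every $C_i \geq 1$ for $i \geq 1$, then $\sum_{j=0}^{k} C_j \geq C_0 + k \geq k+1$, contradicting the hypothesis that the sum equals $k$. Given such a leaf candidate $v_i$, I would pick some parent $v_p$ with $C_p \geq 1$, remove $v_i$ from the vertex set, and decrement $C_p$ by one. The residual capacity sequence sums to $k-1$ with all entries nonnegative, so by the induction hypothesis it yields a rooted tree on the remaining $k$ vertices; adjoining the edge $v_p \to v_i$ then produces the desired tree on $k+1$ vertices, since this operation preserves the rooted-tree property and restores $v_p$'s out-degree to the original $C_p$ while giving $v_i$ in-degree $1$ and out-degree $0 = C_i$.

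The subtle point, which I expect to be the only real obstacle, is the choice of $p$: since the lemma hypothesis requires the root capacity to be a \emph{positive} integer, I cannot decrement $C_0$ in the recursive call unless $C_0 \geq 2$. I would therefore split into two subcases. If $C_0 \geq 2$, set $p = 0$. If $C_0 = 1$, then $\sum_{j=1}^{k} C_j = k-1 \geq 1$ guarantees some $j \geq 1$ with $C_j \geq 1$, and I would take $p = j$. In either subcase, the reduced problem on $k-1$ non-root vertices satisfies all the hypotheses of the lemma, so the induction closes. The rest of the argument is just the standard ``grow-a-tree-by-attaching-a-leaf'' construction.
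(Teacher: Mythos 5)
Your proof is correct, and it closes the only delicate point (preserving $C_0 \geq 1$ in the recursive call) cleanly. It does, however, take a genuinely different route through the induction than the paper.

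The paper also inducts on $k$, but its base case is $k = C_0$ (all $C_i = 0$ for $i \geq 1$, giving a star), and its inductive step is a merge-and-split: after ordering so that $C_1 \le \cdots \le C_k$, it replaces the pair $(C_{k-1}, C_k)$ by a single vertex of capacity $C_{k-1} + C_k - 1$ (nonnegative because $C_k \ge 1$), applies the hypothesis on $k-1$ vertices, then re-introduces $v_k$ by detaching $C_k$ of $v_{k-1}$'s children and re-hanging them under $v_k$, with the new edge $v_{k-1} \to v_k$ supplying $v_k$'s indegree. Your argument is instead a leaf-removal: locate a zero-capacity vertex (guaranteed by the counting argument), decrement a positive-capacity parent, recurse, and re-attach. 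Both are sound. Your version avoids the sorting step and the ``detach-and-redistribute'' surgery, and it more closely mirrors the standard degree-sequence-to-tree construction; the paper's version gets a slightly simpler recursive invariant at the cost of having to verify that the edge re-wiring preserves the tree property. One small stylistic note: after deleting $v_i$ the surviving non-root vertices are not contiguously indexed, so strictly speaking you should relabel them $v_1, \ldots, v_{k-1}$ before invoking the hypothesis; this is cosmetic and does not affect correctness.
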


		\begin{lem} \label {1res} Data $x$ with rate $L$ can be transmitted to peers $p_1,\ldots,p_k$ by using source capacity $C_0$ and peer upload capacities $C_1,\ldots,C_k$  if
		
		$$ C_0 \ge L\ and $$
		$$ \sum_{i=0}^k C_i \ge kL. $$

		\end{lem}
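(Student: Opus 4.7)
The plan is to proceed by induction on $L$, using Lemma~\ref{routree} as the atomic building block. In each step I peel off one unit of rate by sending a unit-rate sub-stream of $x$ along a single spanning tree rooted at $p_0$, then recursively transmit the remaining rate-$(L-1)$ portion using the residual capacities. This reduces the task to repeatedly producing a spanning tree whose out-degree sequence is compatible with the current capacities.

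For the base case $L=1$, the hypotheses become $C_0 \geq 1$ and $\sum_{i=0}^k C_i \geq k$. I would discard excess capacity so that the total equals $k$ exactly, then invoke Lemma~\ref{routree} to obtain a directed spanning tree rooted at $p_0$; forwarding $x$ along this tree respects all capacity bounds and delivers $x$ at rate~$1$.

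For the inductive step with $L \geq 2$, I need to choose nonnegative integer out-degrees $a_0, a_1, \ldots, a_k$ summing to $k$, with $a_0 \geq 1$ and $a_i \leq C_i$ for all $i$, such that the residual capacities still satisfy the hypotheses: $C_0 - a_0 \geq L-1$ and $\sum_{i=0}^k (C_i - a_i) \geq k(L-1)$. The second condition is automatic from $\sum a_i = k$ and $\sum C_i \geq kL$; combined with the first, this forces
\[
\max\bigl(1,\; k - \textstyle\sum_{i \geq 1} C_i\bigr) \;\leq\; a_0 \;\leq\; \min\bigl(k,\; C_0 - L + 1\bigr).
\]
With $a_0$ chosen in this window I distribute the remaining $k - a_0$ units among the peers subject to $a_i \leq C_i$, invoke Lemma~\ref{routree} to construct the tree, transmit one unit-rate sub-stream of $x$ along it, and recurse on the residual capacities.

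The main obstacle is checking that the window for $a_0$ is nonempty and that the subsequent distribution among peers is feasible. The upper and lower bounds on $a_0$ are consistent precisely when $\sum_{i=0}^k C_i \geq k+L-1$, which follows from $\sum_{i=0}^k C_i \geq kL$ together with $(k-1)(L-1) \geq 0$. Feasibility of distributing $k - a_0$ units reduces to $\sum_{i \geq 1} C_i \geq k - a_0$, which is exactly the lower bound in the window. Once these two elementary inequalities are verified, the induction closes cleanly, and no structural ingredient beyond Lemma~\ref{routree} is required.
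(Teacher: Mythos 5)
Your proof is correct and follows essentially the same approach as the paper: induction on $L$, with Lemma~\ref{routree} producing the spanning tree for each unit of rate. The only difference is cosmetic: you peel off one unit first and then recurse on the residual capacities, whereas the paper reserves capacities $\overline{C_i}$ (with $L-1 \le \overline{C_0} \le C_0-1$, $0\le\overline{C_i}\le C_i$, $\sum\overline{C_i}=k(L-1)$) for the recursive call and then routes the last unit on the leftover; the two are mirror images, and your explicit check that the window for $a_0$ is nonempty (via $kL \ge k+L-1$) is a welcome bit of detail that the paper's proof leaves implicit.
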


	\begin{lem}\label{goodvec}
	Given the sets of peers $X_1,X_2,\ldots,X_n$ and upload capacities $C_0,C_1,\ldots,C_k$, the rate vector $(L_1,L_2,\ldots,L_n)$
	is achievable if for every $j\in\lbrace 1,\ldots,n\rbrace$
	\begin{equation}\label{gv}
	\sum_{p_i\in X_j}C_i\ge \sum_{i=1}^{j-1}L_i+\sum_{i=j}^n |X_i|L_i	
	\end{equation}
	and
	\begin{equation}\label{gv2}
	C_0\ge\sum_{i=1}^n L_i.
	\end{equation}

	\end{lem}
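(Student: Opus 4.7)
My plan is to induct on the number of layers $n$, at each step peeling off the top layer $x_n$ so that the residual problem is another instance of the lemma with $n-1$ layers. For the base case $n = 1$, hypothesis \eqref{gv} at $j = 1$ asserts $\sum_{p_i \in X_1} C_i \ge |X_1| L_1 \ge (|X_1|-1) L_1$, which together with \eqref{gv2} (giving $C_0 \ge L_1$) is exactly the hypothesis of Lemma \ref{1res} with $k = |X_1|-1$ receiving peers.

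For the inductive step, suppose first that $|X_n| \ge 2$. I would reserve $L_n$ of the source's upload capacity for transmitting $x_n$ and then choose nonnegative integers $c_{i,n} \le C_i$ for each $p_i \in X_n \setminus \{p_0\}$ with $\sum_{p_i \in X_n \setminus \{p_0\}} c_{i,n} = (|X_n|-2) L_n$. Such an allocation exists because combining \eqref{gv} at $j = n$ with \eqref{gv2} yields $\sum_{p_i \in X_n \setminus \{p_0\}} C_i \ge (|X_n|-1) L_n$, comfortably above the required total, and one can fill peers greedily. Lemma \ref{1res} (with $L = L_n$ and $k = |X_n|-1$) then delivers $x_n$ to the peers in $X_n \setminus \{p_0\}$ using exactly these capacities.

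The core of the argument is to verify that the residual instance, with peer capacities $C_0' = C_0 - L_n$ and $C_i' = C_i - c_{i,n}$ for $p_i \in X_n \setminus \{p_0\}$ (and $C_i' = C_i$ otherwise), still satisfies \eqref{gv} and \eqref{gv2} for $n-1$ layers. Condition \eqref{gv2}' reduces to the original \eqref{gv2}. For \eqref{gv}' at any $j \le n-1$, the key observation is the nested containment $X_n \subseteq X_j$, which implies that the total capacity removed from $\sum_{p_i \in X_j} C_i$ is exactly $L_n + (|X_n|-2) L_n = (|X_n|-1) L_n$, independent of $j$. Subtracting this from the original right-hand side of \eqref{gv} and matching against the reduced target $\sum_{i=1}^{j-1} L_i + \sum_{i=j}^{n-1} |X_i| L_i$ leaves a slack of $L_n \ge 0$, so \eqref{gv}' holds and the induction hypothesis applies.

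The main obstacle is precisely this bookkeeping: the nested containment $X_n \subseteq X_j$ is what makes the capacity-deduction uniform across all relevant peer subsets, so that the slack built into the original \eqref{gv} by the $|X_n| L_n$ term is exactly what is needed to restore the reduced hypothesis. The degenerate case $|X_n| = 1$ fits in cleanly, since no peer recipients means no capacity needs to be allocated to layer $n$ and the slack $|X_n| L_n = L_n$ in \eqref{gv} directly validates \eqref{gv}' for $n-1$ layers.
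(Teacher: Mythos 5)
The overall architecture of your proof matches the paper's: both induct on $n$, both peel off the top layer $x_n$ via Lemma~\ref{1res}, and both exploit the nesting $X_n\subseteq X_j$ so that the capacity consumed by $x_n$ is deducted uniformly from every $\sum_{p_i\in X_j}C_i$. However, there is a concrete gap in your inductive step.

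You claim that combining~\eqref{gv} at $j=n$ with~\eqref{gv2} yields $\sum_{p_i \in X_n \setminus \{p_0\}} C_i \ge (|X_n|-1) L_n$. This does not follow. From~\eqref{gv} at $j=n$ you get $\sum_{p_i\in X_n\setminus\{p_0\}}C_i \ge \sum_{i=1}^{n-1}L_i+|X_n|L_n - C_0$, and~\eqref{gv2} is a \emph{lower} bound on $C_0$, which pushes this quantity \emph{down}, not up. Concretely, take $n=2$, $X_1=X_2=\{p_0,p_1,p_2\}$, $L_1=L_2=1$, $C_0=100$, $C_1=C_2=0$. All hypotheses~\eqref{gv},~\eqref{gv2} hold, but $\sum_{p_i\in X_2\setminus\{p_0\}}C_i = 0 < 2 = (|X_2|-1)L_2$, and in fact there is no way to find integers $c_{i,2}\le C_i$ summing to $(|X_2|-2)L_2 = 1$. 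Your scheme (source sends only $L_n$, peers carry $(|X_n|-2)L_n$) is therefore infeasible whenever the source holds most of the slack in~\eqref{gv}.

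The repair is to let the source's share of $x_n$ float: choose $\overline{C_0}\ge L_n$ and peer shares $\overline{C_i}\le C_i$ with $\overline{C_0}+\sum_{p_i\in X_n\setminus\{p_0\}}\overline{C_i}$ equal to a fixed total (at least $(|X_n|-1)L_n$), taking from the source whatever the peers cannot cover. The deduction from $\sum_{p_i\in X_j}C_i$ for $j<n$ is then still that fixed total, so the verification of~\eqref{gv}' goes through as in your write-up; but the residual constraint $C_0' \ge \sum_{i=1}^{n-1}L_i$ now requires a separate argument, since $C_0' = C_0-\overline{C_0}$ with $\overline{C_0}$ possibly larger than $L_n$. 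That argument is exactly what~\eqref{gv} at $j=n$ supplies: the more the peers fall short, the larger $C_0$ must already be. This is the part of the bookkeeping the paper's proof carries out and your fixed-$L_n$ allocation skips.
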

	
Intuitively, if one of the inequalities in Lemma \ref{goodvec}, say the $j$th one, does not hold, this means that the nodes in set $X_j$ cannot handle the transmission of data layers $x_j$ through $x_n$. Hence, some peers from the set $X_1\setminus X_j$ need to help in transmitting those data layers, necessitating some additional capacity for transmitting this data to peers in $X_1\setminus X_j$ which do not themselves demand it. This requires additional capacity beyond that given in (\ref{main}).

To characterize this explicitly, it is useful to define the margin of the $j$th inequality:
		$$N_j=\sum_{i=j}^n |X_i|L_i+\sum_{i=1}^{j-1}L_i-\sum_{p_i\in X_j}C_i.$$
		For completeness we also define the $(n+1)$-th margin $N_{n+1}$ as zero. The capacity region derived in the next section is stated in terms of these margins.
In fact, not all of them, but a special subset of them, will be used. This subset is defined as follows:
	\begin{dfn}
		 			For a finite sequence $\lbrace a_n \rbrace=a_1,\ldots,a_s$, the dominant subsequence of $\lbrace a_n \rbrace$ is the subsequence $\lbrace a_{i_n}\rbrace= a_{i_1},\ldots,a_{i_h}$ defined by
		 			
		 			i) $i_h=s$
		 			
		 			ii) $i_j$ is the greatest index such that $i_j<i_{j+1}$ and $a_{i_j}>a_{i_{j+1}}$.
	\end{dfn}

\section{Converse Bound for Capacity Region}
	In this section, we present a converse bound on the capacity region, which is shown to be tight in the following section. 

	 	\begin{thm}\label{general bound}
	Given the sets of peers $X_1,X_2,\ldots,X_n$ and upload capacities $C_0,C_1,\ldots,C_k$, if the rate vector $(L_1,L_2,\ldots,L_n)$
	 		is achievable by any coding scheme, then
	 		$$\sum_{p_i\in X_1} C_i \ge \sum_{i=1}^n |X_i|L_i+\sum_{i=1}^h \frac{N_{d_i}-N_{d_{i+1}}}{|X_{d_i}|-1}$$
	 		where $N_{d_1},\ldots,N_{d_{h+1}}$ is the dominant subsequence of $N_1,\ldots,N_{n+1}$.
	 	\end{thm}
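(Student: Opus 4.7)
The plan is to obtain the bound by combining a family of per-subset cut-set inequalities, one for each $X_j$, via a weighted sum taken along the dominant subsequence.

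First, for each $j \in \{1,\ldots,n\}$ I would derive a per-$j$ converse. Summing the decoding requirement (iv) over all $p \in X_j \setminus \{p_0\}$ and using independence of the layers yields a lower bound on the aggregate incoming information rate to $X_j \setminus \{p_0\}$, namely $\sum_{i<j}|X_j|L_i + \sum_{i\ge j}|X_i|L_i$. Splitting this incoming rate into contributions from links inside $X_j$ and links from $X_1\setminus X_j$ into $X_j$, and using a second cut-set across the boundary of $X_j$ (which forces any outside helper to first receive from $X_j$ what it later relays back), I would establish an inequality of the shape
\[
\sum_{p_i \in X_j} C_i \;+\; \alpha_j\, W_j \;\ge\; \sum_{i=1}^{j-1} L_i + \sum_{i=j}^n |X_i|L_i,
\]
where $W_j$ is the total outside-to-inside rate of information about layers $j,\ldots,n$ used by the scheme and $\alpha_j$ is a coefficient of order $1/(|X_j|-1)$, reflecting the combinatorial fact that a single outside helper can relay each received bit to at most $|X_j|-1$ non-source peers of $X_j$. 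Equivalently, when $N_j>0$ the inequality forces the outside capacity used for helping $X_j$ to be at least a constant multiple of $N_j$, incurring an unavoidable overhead of $N_j/(|X_j|-1)$ above the naive delivery total $\sum_i |X_i|L_i$.

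Next I would combine the per-$j$ bounds over the dominant indices $d_1<\cdots<d_h$ by a weighted sum. The weights would be chosen so that on the left-hand side each $C_i$ (for $p_i \in X_1$) appears with total coefficient exactly $1$ (producing $\sum_{p_i\in X_1}C_i$) and the auxiliary $W_{d_\ell}$ contributions cancel; the nesting $X_{d_1}\supseteq X_{d_2}\supseteq\cdots$ and the dominance condition $N_{d_1}>N_{d_2}>\cdots>N_{d_{h+1}}=0$ are precisely what make such a consistent weighting possible. On the right-hand side, the weighted combination of the quantities $\sum_{i<d_\ell}L_i + \sum_{i\ge d_\ell}|X_i|L_i$ telescopes, after using $N_{d_{h+1}}=0$, to $\sum_i|X_i|L_i + \sum_{\ell=1}^h (N_{d_\ell}-N_{d_{\ell+1}})/(|X_{d_\ell}|-1)$. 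Indices $j$ that lie strictly between two consecutive dominant indices are not used directly; their per-$j$ inequalities are implied by the dominant ones through the inequality $N_j \le N_{d_{\ell+1}}$ that holds for $d_\ell<j<d_{\ell+1}$ by the definition of the dominant subsequence.

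The hard step is the per-$j$ cut inequality, not the telescoping. The combinatorial meaning of the $1/(|X_j|-1)$ factor is transparent, but proving it for an arbitrary, possibly coded, transmission scheme requires an information-theoretic argument: for each $p_o \in X_1\setminus X_j$ one must bound the mutual information between $p_o$'s outgoing transmissions and the layers $x_j,\ldots,x_n$ in terms of the mutual information on $p_o$'s incoming edges from $X_j$, converting the bit-level ``one-incoming-copy, $|X_j|-1$-outgoing-copies'' accounting into a clean statement about entropy rates on the cut edges; one also has to check that the aggregate inside-to-outside entropy is at least $\sum_{i<j}L_i$, which is the cut-set for source-to-outside-demand. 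Once the per-$j$ inequality is on solid footing, the remaining assembly along the dominant subsequence is routine linear algebra on the telescoping sum.
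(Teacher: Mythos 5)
Your plan has the right combinatorial intuition about the $1/(|X_j|-1)$ overhead, but as described the two main steps both have genuine gaps, and the paper's proof is organized quite differently. The most concrete problem is the combination step. You propose to take a weighted sum of per-$j$ inequalities whose left-hand sides contain $\sum_{p_i\in X_j}C_i$ (plus a $W_j$ correction) and claim the weights can be chosen so every $C_i$ appears with total coefficient $1$. Because the sets are nested, $X_{d_1}\supseteq X_{d_2}\supseteq\cdots$, a peer $p_i\in X_{d_h}$ appears in \emph{every} one of the sums $\sum_{p_i\in X_{d_\ell}}C_i$, while a peer in $X_{d_1}\setminus X_{d_2}$ appears only in the first; the only nonnegative weight vector giving all peers total coefficient $1$ is $w_1=1$, $w_2=\cdots=0$, which destroys the combination. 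So a naive weighted sum of the per-$j$ cut inequalities cannot produce $\sum_{p_i\in X_1}C_i$ on the left. Similarly, the claim that the inequalities for non-dominant $j$ are implied by those for dominant $j$ via $N_j\le N_{d_{\ell+1}}$ is stated but not justified — the $N_j$ comparison alone does not obviously transfer a cut inequality across different cut sets $X_j$ with different $|X_j|$.

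The paper avoids both problems by taking a structurally different route: it never forms a per-$j$ inequality with $\sum_{p_i\in X_j}C_i$ on the left. Instead, for each layer $j$ and each $p_i\in X_j$ it derives the exact identity $H(x_j)=I(X_j\to(X_1\setminus X_j);x_j)+I(X_j\to p_i;x_j\mid X_j\to(X_1\setminus X_j))$ (using that $X_1\setminus X_j$ generates no new information), sums over $p_i\in X_j$, and isolates $I(X_j\to(X_1\setminus X_j);x_j)$ with the $1/(|X_j|-1)$ factor. It then establishes a \emph{single global} inequality (Lemma~\ref{lhs}) bounding $\sum_j I(X_j\to(X_1\setminus X_j);x_j)$ in terms of $\sum_{p_i\in X_1}C_i-\sum_j|X_j|H(x_j)$, and introduces scheme-dependent running quantities $A_j$ with $A_j-A_{j+1}=(|X_j|-1)I(X_j\to(X_1\setminus X_j);x_j)\ge 0$ and $A_j\ge N_j$ (Lemma~\ref{N_j'}). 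The dominant subsequence then enters purely as a device for lower-bounding the telescoping sum $\sum_j\frac{A_j-A_{j+1}}{|X_j|-1}$: nonnegativity of increments and monotonicity of $|X_j|$ let one replace $\frac{1}{|X_j|-1}$ by $\frac{1}{|X_{d_\ell}|-1}$ on each block, and then Abel summation plus $A_{d_\ell}\ge N_{d_\ell}$ gives the theorem. Your plan correctly recognizes that the serious work is an information-theoretic inequality rather than the telescoping, but it does not supply that inequality and the assembly it sketches does not check out; to repair it you would effectively need to rediscover the $A_j$ bookkeeping and the two lemmas.
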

	 		\begin{proof}
	 		For a resolution $x_j$ and a peer $p_i\in X_j$ we have, from property iv in Section~\ref{probdef},
	 		$$H(x_j)=I(X_1\rightarrow p_i;x_j)=I(X_1\setminus X_j \rightarrow p_i,X_j\rightarrow p_i;x_j).$$
	 		If we view $X_1\setminus X_j$ as a supernode, outgoing links should be functions of incoming links, since peers in the set $X_1\setminus X_j$ do not create additional data besides incoming data (property ii). Hence, links in set $X_1\setminus X_j\rightarrow p_i$ are completely dependent on links in set $X_j\rightarrow X_1\setminus X_j$. Then, we may write:
	 		$$ H(x_j)=I((X_1\setminus X_j) \rightarrow p_i,X_j\rightarrow p_i;x_j)\le$$
	 		$$  I(X_j\rightarrow (X_1\setminus X_j),X_j\rightarrow p_i;x_j)\le H(x_j)  $$
	 		$$\Rightarrow  \;\; H(X_j) = I(X_j\rightarrow (X_1\setminus X_j);x_j)$$
	 		$$\hspace{0.2 in}+I(X_j\rightarrow p_i;x_j|X_j\rightarrow (X_1\setminus X_j)).$$
	 		 Summing this for all peers in $X_j$ yields
	 		 $$|X_j|H(X_j)= |X_j|I(X_j\rightarrow (X_1\setminus X_j);x_j)+$$
	 		 $$\sum_{p_i\in X_j}I(X_j\rightarrow p_i;x_j|X_j\rightarrow (X_1\setminus X_j)).$$
	 		 By rearranging this, we can obtain
	 		 $$ I(X_j\rightarrow (X_1\setminus X_j);x_j) =\frac{1}{|X_j|-1}[|X_j|H(x_j)-  $$
	 		 $$ I(X_j\rightarrow (X_1\setminus X_j);x_j)- $$
	 		 \begin{equation}\label{1}
	 		 \sum_{p_i\in X_j}I(X_j\rightarrow p_i;x_j|X_j\rightarrow (X_1\setminus X_j))].
	 		 \end{equation}
	 		 The left hand side of this equation can be replaced by parameters which are independent from the transmission scheme by using the following lemma:
	 		 \begin{lem}\label{lhs}
	 		 	 		$ \sum\limits_{j=1}^{n}I(X_j\rightarrow (X_1\setminus X_j);x_j)\le\sum_{p_i\in X_j}C_i - \sum_{j=1}^{n} |X_j|H(x_j).$
	 		 	 	\end{lem}

	 		 	 	By using  Equation (\ref{1}) and Lemma \ref{lhs}, we can obtain:
	 		 	 	$$\sum_{p_i \in X_j}C_i\ge \sum_{j=1}^{n} |X_j|H(x_j)+\sum_{j=1}^{n} \frac{1}{|X_j|-1}[	|X_j|H(x_j) $$
	 		 	 	$$-I(X_j\rightarrow (X_1\setminus X_j);x_j) $$
	 		 	 	\begin{equation}\label{2}
	 		 	  -\sum_{p_i\in X_j}I(X_j\rightarrow p_i;x_j|X_j\rightarrow (X_1\setminus X_j))].
	 		 	 	\end{equation}	
	 		 	 		Now define $A_j$ as
	 		 	 		$$A_j=\sum_{l=j}^n|X_l|H(x_l)-\sum_{l=j}^{n}[ I(X_l\rightarrow (X_1\setminus X_l);x_l)+$$
	 		 	 		$$\sum_{p_i\in X_l}I(X_l\rightarrow p_i;x_l|X_l\rightarrow (X_1\setminus X_l))].$$
	 		 	 		For the sake of completeness, define also $A_{n+1}=0$.
	 		 	 		Putting this into Inequality (\ref{2}), we  obtain:
	 		 	 	\begin{equation}\label{3}
	 		 	 	\sum_{p_i \in X_j}C_i\ge  \sum_{j=1}^{n} |X_j|H(x_j)+\sum_{j=1}^{n} \frac{A_j-A_{j+1}}{|X_j|-1}.
	 		 	 	\end{equation}	 		 	 		
	 		 	 		We also have, from Equation (\ref{1}):
	 		 	 		\begin{equation}\label{positivity}
	 		 	 		A_j-A_{j+1}=(|X_j|-1) I(X_j\rightarrow (X_1\setminus X_j);x_j)\ge 0.
	 		 	 		\end{equation}
	 		 	 		
	 		 	 		Note that the $A_j$ values are determined by the transmission scheme. To obtain a bound which is independent from transmission scheme, we use the following lemma:
	 		 	 		 		\begin{lem}\label{N_j'}
	 		 	 		 		$A_j\ge N_j$.
	 		 	 		 		\end{lem}

	 			Let us examine the last sum in (\ref{3}):
	 			$$\sum_{j=1}^n \frac{A_j-A_{j+1}}{|X_j|-1}=\sum_{j=1}^{d_1-1}\frac{A_j-A_{j+1}}{|X_j|-1}+\sum_{j=d_1}^{d_2-1}\frac{A_j-A_{j+1}}{|X_j|-1}+\ldots+$$
	 			$$\sum_{j=d_h}^n\frac{A_j-A_{j+1}}{|X_j|-1}.$$
	 			Using inequality (\ref{positivity}),\begin{eqnarray*}
	 			& &\hspace{-0.27 in}0+\sum_{j=d_1}^{d_2-1}\frac{A_j-A_{j+1}}{|X_{d_1}|-1}+\ldots+\sum_{j=d_h}^n\frac{A_j-A_{j+1}}{|X_{d_h}|-1}\\
	 			& = &	\frac{A_{d_1}}{|X_{d_1}|-1}+A_{d_2}\left(\frac{1}{|X_{d_2}|-1}-\frac{1}{|X_{d_1}|-1}\right)+\ldots+\\
	 			& &A_{d_h}\left(\frac{1}{|X_{d_h}|-1}-\frac{1}{|X_{d_{h-1}}|-1}\right)\\&\ge& \frac{N_{d_1}}{|X_{d_1}|-1}+
	 			N_{d_2}\left(\frac{1}{|X_{d_2}|-1}-\frac{1}{|X_{d_1}|-1}\right)+\ldots+\\
	 			& &N_{d_h}\left(\frac{1}{|X_{d_h}|-1}-\frac{1}{|X_{d_{h-1}}|-1}\right)\\& = &\sum_{i=1}^h \frac{N_{d_i}-N_{d_{i+1}}}{|X_{d_i}|-1}.\end{eqnarray*}
	 			The last  inequality is due to Lemma \ref{N_j'} . By putting this result in (\ref{3}), we obtain:
	 			$$\sum_{p_i\in X_1} C_{i} \ge \sum_{i=1}^n |X_i|L_i+\sum_{i=1}^h \frac{N_{d_i}-N_{d_{i+1}}}{|X_{d_i}|-1}$$
	 		\end{proof}

\section{A Routing Scheme that Achieves the Capacity  Region}\label{method}
In this section, we give a transmission scheme using multicast routing trees that achieves the bound in Theorem \ref{general bound}.

	\begin{thm}\label{final}
	Given the sets of peers $X_1,X_2,\ldots,X_n$ and upload capacities $C_0,C_1,\ldots,C_k$, the rate vector $(L_1,L_2,\ldots,L_n)$
	is achievable by routing if 		
		\begin{equation}\label{final1}
			C_0\ge \sum_{i=1}^n L_i,
		\end{equation}
		\begin{equation}\label{final2}
			\sum_{p_i\in X_1} C_i \ge \sum_{i=1}^n |X_i|L_i+\sum_{i=1}^h \frac{N_{d_i}-N_{d_{i+1}}}{|X_{d_i}|-1}
		\end{equation}
		where $N_{d_1},\ldots,N_{d_{h+1}}$ is the dominant subsequence of $N_1,\ldots,N_{n+1}$.
	\end{thm}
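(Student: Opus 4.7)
The plan is to realize the scheme as a superposition of $n$ single-layer multicast trees $T_1,\ldots,T_n$, where tree $T_j$ delivers $x_j$ at rate $L_j$ to a helper set $H_j$ with $X_j\subseteq H_j\subseteq X_1$; peers in $H_j\setminus X_j$ act as pure forwarders for layer $j$. I would split each capacity as $C_i=\sum_{j=1}^n c_{ij}$ with $c_{ij}\ge 0$ and $c_{ij}=0$ whenever $p_i\notin H_j$, turning the problem into $n$ independent single-resolution instances to which Lemma \ref{1res} (and hence Lemma \ref{routree}) applies. The overall routing scheme is then the union of the $n$ trees thus produced. What the proof must exhibit is a choice of $H_j$ and $c_{ij}$ satisfying $c_{0j}\ge L_j$, $\sum_{p_i\in H_j}c_{ij}\ge (|H_j|-1)L_j$ for each $j$, and $\sum_{j=1}^n c_{ij}\le C_i$ for each $i$.

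The first step is to select the helper sets using the dominant subsequence $d_1<\cdots <d_{h+1}=n+1$. The layers for which $X_j$ alone lacks capacity are precisely the ones violating the hypothesis of Lemma \ref{goodvec}, and my plan is to promote each such layer to a helper set drawn from $\{X_{d_1},\ldots,X_{d_{h+1}}\}$, chosen so that $H_j\supseteq X_j$ and so that layers in the same dominant block share a common helper set. With this grouping the aggregate ``demand'' telescopes: substituting the definition of $N_j$ and summing blockwise should yield
$$\sum_{j=1}^n (|H_j|-1)L_j=\sum_{i=1}^n |X_i|L_i+\sum_{i=1}^h \frac{N_{d_i}-N_{d_{i+1}}}{|X_{d_i}|-1},$$
which is exactly the right-hand side of (\ref{final2}).

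The second step is to realize this aggregate demand by a feasible per-layer capacity split. I would set $c_{0j}=L_j$, which is permitted by (\ref{final1}), and phrase the remaining assignment as a bipartite feasibility problem with peer supplies $C_i$, layer demands $(|H_j|-1)L_j$, and an edge $(p_i,j)$ present iff $p_i\in H_j$. Because the chosen helper sets are nested along the dominant chain $X_{d_1}\supseteq X_{d_2}\supseteq\cdots$, Hall's condition collapses to one inequality per dominant block, and each such inequality reduces---via the telescoping identity above---to a margin bound guaranteed by (\ref{final2}). A standard max-flow extraction then yields the $c_{ij}$, and applying Lemma \ref{1res} to each $(H_j,\{c_{ij}\})$ produces the required multicast trees through Lemma \ref{routree}.

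I expect the main obstacle to lie in making the helper-set choice of step one precise. The strict-decrease property of the dominant subsequence is essential here: it is what makes the per-block Hall inequality tight at $N_{d_i}$ rather than at some smaller intermediate $N_j$, and it is what makes the overhead term $\frac{N_{d_i}-N_{d_{i+1}}}{|X_{d_i}|-1}$ in (\ref{final2}) exactly match the deficit contributed by the $i$th dominant block; a naive grouping of layers would break either $H_j\supseteq X_j$ or the tightness of Hall's condition. If the direct bookkeeping becomes unwieldy, a cleaner fallback is induction on $h$: peel off one dominant block at a time, from $d_{h+1}$ down to $d_1$, and invoke Lemma \ref{goodvec} on each residual subnetwork after the block's contribution has been absorbed.
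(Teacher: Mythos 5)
Your one-shot decomposition into $n$ full-layer multicast trees is a genuinely different strategy from the paper's, which peels off one violated inequality at a time by induction on $|\{j:N_j>0\}|$: at each step it chooses $m=\min I$, $M=\max I$, sends a rate-$\tfrac{N}{|X_M|-1}$ slice of $x_M$ to helpers in $X_1\setminus X_m$ who each receive only a tiny sub-portion $s_i$ of rate $C_{iM}/|X_M|$ and forward it to $X_M$, then invokes Lemma~\ref{goodvec} on the residual problem once $I$ has shrunk. That distinction --- helpers receiving only partial-rate slices versus the full layer --- is not cosmetic, and it is where your proposal runs into trouble.

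Concretely, there are three gaps. First, the telescoping identity
$\sum_{j=1}^n(|H_j|-1)L_j=\sum_{i=1}^n|X_i|L_i+\sum_{i=1}^h\tfrac{N_{d_i}-N_{d_{i+1}}}{|X_{d_i}|-1}$
is false for any choice of $H_j$ in the examples I checked; e.g.\ with $X_1=\{p_0,\dots,p_3\}$, $X_2=\{p_0,p_1,p_2\}$, $(L_1,L_2)=(1,2)$, $C=(3,0,0,9)$, the right-hand side is $12$, but $\sum_j(|H_j|-1)L_j$ is $9$ when $H_1=H_2=X_1$ and $7$ when $H_2=X_2$ --- it never equals $12$. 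The left side is the genuine tree demand and the right side is the converse bound; these do not coincide, so you cannot use the identity to transfer (\ref{final2}) into a feasibility certificate. Second, restricting $H_j$ to $\{X_{d_1},\dots,X_{d_{h+1}}\}$ is too coarse. Take $X_1=\{p_0,\dots,p_{10}\}$, $X_2=\{p_0,\dots,p_5\}$, $L_1=L_2=1$, $C_0=2$, $C_1=\dots=C_5=0.5$, $C_6=\dots=C_{10}=2.6$; then the dominant subsequence is $(N_2,N_3)$ so your only admissible choice is $H_2=X_2$, but after reserving $c_{01}\ge L_1$ the capacity inside $X_2$ available for layer~2 is at most $1+2.5=3.5<(|X_2|-1)L_2=5$. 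A feasible choice does exist (e.g.\ $H_2=X_2\cup\{p_6,p_7\}$), but that set is neither an $X_{d_i}$ nor any $X_j$, so the ``share a common helper set per dominant block'' rule has to be abandoned; the helper sets in general must be partial extensions of $X_j$ that interpolate between consecutive $X_j$'s, and selecting them is essentially the content of the paper's Lemma~\ref{Choice of C_iM}. Third, you assert that Hall's condition ``collapses to one inequality per dominant block,'' but with $c_{0j}\ge L_j$ pinned by (\ref{final1}) the source constraint couples all blocks, and you never actually verify the resulting system. Your fallback of induction on $h$ is closer in spirit to what works, but the paper's induction is on $|I|$ (the number of positive margins), not on $h$, and the inductive step relies crucially on the partial-slice helper construction, which a single helper set $H_j$ per layer cannot express.
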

	\begin{proof}
	The proof is by induction on the number of inequalities from Lemma \ref{goodvec} which are not satisfied. For this purpose let us define the set
	$$I=\lbrace i|N_i>0\rbrace.$$
	We will show that we can reduce the size of $I$ by at least one, by using some amount of capacity, such that the residual system also satisfies (\ref{final1}) and (\ref{final2}).
	
	The base case is the one where $I$ is the empty set, i.e. all $N_j$ values are less than or equal to zero and it is examined in Lemma \ref{goodvec}.
	Let $m$ and $M$ denote the minimum and maximum elements of $I$, respectively. The following lemma is essential for our method:
	
	\begin{lem}\label{Choice of C_iM}
			$\forall p_i \in X_1\setminus X_m, \ \exists C_{iM}$ with $0 \le C_{iM} \le C_i$ such that
			$$\sum_{p_i \in X_j\setminus X_m} C_{iM} \le N-N_j$$ $\forall j\in\{1, 2, \ldots, m-1\}$ and
			$$\sum_{p_i \in X_1\setminus X_m} C_{iM} = \frac{|X_M|N}{|X_M|-1}$$ where $N= min(N_m,N_M)$.
	\end{lem}

	Now, for each $p_i\in X_1\setminus X_m$ let us choose $C_{iM}$ as in Lemma \ref{Choice of C_iM}. Then we have
	\begin{equation}\label{helpamount}
	\sum_{p_i \in X_1\setminus X_m} C_{iM} = \frac{|X_M|N}{|X_M|-1}
	\end{equation}
	where $N=\min (N_m,N_M)$. Let us also define
	$$C_{0M}=\frac{N}{|X_M|-1}.$$
	Now take a rate-$\frac{N}{|X_M|-1}$ portion of $x_M$, called $s$. Then, using equality (\ref{helpamount}), the rate of $s$ is given by
	$$\frac{N}{|X_M|-1}=\sum_{p_i \in X_1\setminus X_m} \frac{C_{iM}}{|X_M|}.$$
	Hence we can divide $s$ into $|X_1\setminus X_m|$ portions $s_i$ corresponding to peers $p_i\in X_1\setminus X_m$, where the rate of portion $s_i$ is given by $\frac{C_{iM}}{|X_M|}$. To each peer $p_i\in X_1\setminus X_m$, send $s_i$ from the source. This consumes $C_{0M}$ amount of capacity of the source. Then, from each $p_i\in X_1\setminus X_m$, send $s_i$ to the peers in $X_M$. This consumes $C_{iM}$ amount of capacity of each $p_i\in X_1\setminus X_m$. In this way, transmission of portion $s$ is completed. After the procedure, we have residual capacities
		$$
						 C_i'=
						\left\{
						\begin{array}{cl}
						 C_i-C_{iM} & if \ p_i\in X_1\setminus X_m \\
						 C_i & if \ p_i \in X_m\setminus \lbrace p_0\rbrace \\
						 C_0-\frac{N}{|X_M|-1} & if\ p_i=p_0.
						\end{array}
						\right.
		$$
	and residual data rates
					$$
						 L_i'=
						\left\{
						\begin{array}{cl}
						 L_i-\frac{N}{|X_M|-1} & if\ i=M \\
						 L_i & otherwise.
						\end{array}
						\right.
					$$
	 The $N_i$ values are updated accordingly. Denoting the updated value of $N_j$ by $N_j'$, we calculate it differently for three cases:
	
	\textit{i)} If $j<m$:
		\begin{eqnarray*}N_j'& = &\sum_{i=j}^n |X_i|L_i'-C_0'+\sum_{i=1}^{j-1}L_i'-\sum_{p_i\in X_j\setminus \lbrace p_0\rbrace}C_i'\\
		& = &\sum_{i=j}^n |X_i|L_i-\frac{|X_M|N}{|X_M|-1}-C_0+\frac{N}{|X_M|-1}+\sum_{i=1}^{j-1}L_i\\
		& &-\sum_{p_i\in X_j\setminus \lbrace p_0\rbrace}C_i+\sum_{p_i\in X_j\setminus X_m}C_{iM}\\& = &(\sum_{i=j}^n |X_i|L_i+\sum_{i=1}^{j-1}L_i-\sum_{p_i\in X_j}C_i)\\
		& &-(\frac{|X_M|N}{|X_M|-1}-\frac{N}{|X_M|-1})+\sum_{p_i\in X_j\setminus X_m}C_{iM}\\
		& = &N_j-N+\sum_{p_i\in X_j\setminus X_m}C_{iM}.\end{eqnarray*}
		By the choice of $C_{iM}$ values, using Lemma \ref{Choice of C_iM}, we have
		\begin{equation}\label{i<m}
		N_j'= N_j-N+\sum_{p_i\in X_j\setminus X_m}C_{iM}\le 0
	\end{equation}
	
	\textit{ii)} If $m\le j\le M$:
		\begin{eqnarray*}N_j'& = &\sum_{i=j}^n |X_i|L_i'-C_0'+\sum_{i=1}^{j-1}L_i'-\sum_{p_i\in X_j\setminus \lbrace p_0\rbrace}C_i'\\
		& = &\sum_{i=j}^n |X_i|L_i-\frac{|X_M|N}{|X_M|-1}-C_0+\frac{N}{|X_M|-1}\\
		& & +\sum_{i=1}^{j-1}L_i-\sum_{p_i\in X_j\setminus \lbrace p_0\rbrace}C_i\end{eqnarray*}
		\begin{equation}\label{m<i<M}\hspace{- 2 in}
		\Rightarrow\;\;N_j'= N_j-N.
		\end{equation}
		
	\textit{iii)} If $j>M$:
		\begin{eqnarray*}N_j'& = &\sum_{i=j}^n |X_i|L_i'-C_0'+\sum_{i=1}^{j-1}L_i'-\sum_{p_i\in X_j\setminus \lbrace p_0\rbrace}C_i'\\
		& = &\sum_{i=j}^n |X_i|L_i-C_0+\frac{N}{|X_M|-1}+\sum_{i=1}^{j-1}L_i\\
		& &-\frac{N}{|X_M|-1}-\sum_{p_i\in X_j\setminus \lbrace p_0\rbrace}C_i\end{eqnarray*}
		\begin{equation}\label{i>M}\hspace{- 2 in}
		\Rightarrow\;\;N_j'= N_j<0.
		\end{equation}
	
	Now let us examine the dominant subsequence $N_{d_1'}',\ldots,N_{d_{h'+1}'}'$ of $N_1',\ldots,N_{n+1}'$.
	Since $N_{n+1}'$ is zero by definition, $N_{d_{h'+1}'}'$ is also zero. Therefore for all $ i\in \lbrace 1,\ldots,h'\rbrace$
	$$N_{d_{i}'}'>N_{d_{h'+1}'}'=0 \Rightarrow m\le d_{i}' \le M.$$
	But the order of $N_i'$ values for $ m\le i \le M $ is the same as the order of $N_i$ values for $ m\le i \le M $ since $N_j'= N_j-N$. Also we know that for all $ i\in \lbrace 1,\ldots,h\rbrace\ $
	$$m\le d_i\le M.$$
	Noting that $N_{d_h}=N_M$, we have two cases:
	$$(N_{d_1'}',\ldots,N_{d_{h'+1}'}')=(N_{d_1}-N,\ldots,N_{d_h}-N,0)$$
	 if $N=N_m<N_M$, and
	$$(N_{d_1'}',\ldots,N_{d_{h'+1}'}')=(N_{d_1}-N,\ldots,N_{d_h-1}-N,0)$$
	if $N=N_M\le N_m$.
	~\\
	These two cases can be considered as one since even if $N$ is equal to $N_M$, we can consider as $N_{d_{h'}'}'=0=N_{d_{h}}-N$ so that it does not affect inequality (\ref{final2}). Hence we can write:
	$$(N_{d_1'}',\ldots,N_{d_{h'}'}',N_{d_{h'+1}'}')=(N_{d_1}-N,\ldots,N_{d_h}-N,0).$$
	
	Now let us calculate the left hand side of the inequality (\ref{final2}) with updated values:
	\begin{eqnarray*}& &\hspace{-0.27 in}\sum_{p_i\in X_1} C_i'=C_0-\frac{N}{|X_M|-1}+\sum_{p_i\in X_1\setminus \lbrace p_0\rbrace} C_i-\sum_{p_i\in X_1\setminus X_m} C_{iM}\\
	& &\hspace{-0.27 in} =\sum_{p_i\in X_1} C_i-\frac{N}{|X_M|-1}-\frac{|X_M|N}{|X_M|-1}\\
	& &\hspace{-0.27 in}\ge \sum_{i=1}^n |X_i|L_i+\sum_{i=1}^h \frac{N_{d_i}-N_{d_{i+1}}}{|X_{d_i}|-1}- \frac{(|X_M|+1)N}{|X_M|-1}\\
	& &\hspace{-0.27 in} =\sum_{i=1}^n |X_i|L_i'+\frac{|X_M|N}{|X_M|-1}+\sum_{i=1}^h \frac{N_{d_i}-N_{d_{i+1}}}{|X_{d_i}|-1}- \frac{(|X_M|+1)N}{|X_M|-1}\\
	& &\hspace{-0.27 in} =\sum_{i=1}^n |X_i|L_i'+\sum_{i=1}^{h-1} \frac{(N_{d_i}-N)-(N_{d_{i+1}}-N)}{|X_{d_i}|-1}+\frac{N_{d_h}-N}{|X_{d_h}|-1}\\
	& &\hspace{-0.27 in} =\sum_{i=1}^n |X_i|L_i'+\sum_{i=1}^{h-1} \frac{N_{d_i'}'-N_{d_i'+1}'}{|X_{d_i}|-1}+\frac{N_{d_h'}'-0}{|X_{d_h}|-1}\\
	& &\hspace{-0.27 in} =\sum_{i=1}^n |X_i|L_i'+\sum_{i=1}^{h} \frac{N_{d_i'}'-N_{d_i'+1}'}{|X_{d_i}|-1}.\end{eqnarray*}
	This shows that inequality (\ref{final2}) is preserved after the procedure. Inequality (\ref{final1}) is also preserved since
	$$C_0'=C_0-\frac{N}{|X_M|-1}\ge \sum_{i=1}^n L_i-\frac{N}{|X_M|-1}=\sum_{i=1}^n L_i'.$$
	
	Now let us look at the updated version $I'$ of $I$. From (\ref{i<m}), (\ref{m<i<M}) and (\ref{i>M}) we know that $I'\subseteq I$. If $N=N_m$, then $N_m'=N_m-N=0\Rightarrow m\notin I' \Rightarrow |I'|\le |I|-1$. Similarly if $N=N_M$, then $M\notin I' \Rightarrow |I'|\le |I|-1$.
	Finally we can say that after reducing the rate of $x_M$ by $\frac{N}{|X_M|-1}$, inequalities (\ref{final1}) and (\ref{final2}) are still correct and number of inequalities from Lemma \ref{goodvec} is reduced by at least one. Hence, by the induction hypothesis, we can complete transmission of the remaining data. This completes the proof.
		\end{proof}
Combining Theorem \ref{general bound} and Theorem \ref{final} with the addition of trivial condition $C_0\ge \sum_{i=1}^n L_i$, we obtain the exact capacity region:	
	\begin{cor} \label{Corollary_1}
	Given the sets of peers $X_1,X_2,\ldots,X_n$ and upload capacities $C_0,C_1,\ldots,C_k$, the rate vector $(L_1,L_2,\ldots,L_n)$
	is achievable if and only if the following inequalities hold
				$$C_0\ge \sum_{i=1}^n L_i$$
			
				$$\sum_{p_i\in X_1} C_i \ge \sum_{i=1}^n |X_i|L_i+\sum_{i=1}^h \frac{N_{d_i}-N_{d_{i+1}}}{|X_{d_i}|-1}$$
			where $N_{d_1},\ldots,N_{d_{h+1}}$ is the dominant subsequence of $N_1,\ldots,N_{n+1}$.
Furthermore, the capacity region is achievable using routing.
	\end{cor}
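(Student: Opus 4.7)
The plan is to obtain Corollary \ref{Corollary_1} by packaging Theorems \ref{general bound} and \ref{final} together with a short argument for the necessity of the source bound $C_0 \ge \sum_{i=1}^n L_i$, since the rest of the corollary is already contained in those two theorems.

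For the ``only if'' direction, suppose $(L_1,\ldots,L_n)$ is achievable by some (possibly inter-session) coding scheme. Theorem \ref{general bound} immediately delivers the second inequality, so I only need to justify the first one. Here I would appeal to properties ii and iv in Section \ref{probdef}: every signal carried by the network is a function of the outgoing links of $p_0$, so for each $p_i \in X_j \setminus \{p_0\}$ that successfully decodes $x_j$, the layer $x_j$ must itself be determined by the union of the source's outgoing links. Assuming, without loss of generality, that every demanded layer is demanded by some non-source peer (otherwise the layer consumes no capacity and may be dropped from the formulation), the joint entropy of the source's outgoing links is at least $H(x_1,\ldots,x_n)=\sum_{i=1}^n L_i$, while by property iii that joint entropy is at most $C_0$. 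Combining these two observations yields $C_0 \ge \sum_{i=1}^n L_i$.

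For the ``if'' direction, I would invoke Theorem \ref{final} directly: it constructs a pure routing scheme that achieves any rate vector satisfying both inequalities. This single step simultaneously establishes sufficiency and the stronger ``furthermore'' clause, namely that the entire capacity region is attainable using only routing, so no coding is needed.

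The main obstacle is really just the bookkeeping surrounding the ``trivial'' source bound: one must be careful about the degenerate case in which some $X_j$ equals $\{p_0\}$, which is handled either by excluding such layers at the outset or by observing that they contribute nothing to either side of the inequalities. Beyond this minor caveat, the corollary is a formal consequence of the converse and achievability theorems already proved, so I expect the write-up to be only a few lines.
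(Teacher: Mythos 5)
Your proposal matches the paper exactly: the corollary is obtained by combining the converse bound of Theorem \ref{general bound} with the routing achievability of Theorem \ref{final}, together with the elementary necessity of $C_0 \ge \sum_{i=1}^n L_i$, which the paper simply labels ``trivial.'' Your expansion of that trivial step (outgoing links of $p_0$ determine every layer that some non-source peer must decode, so $H(x_1,\ldots,x_n)=\sum_i L_i \le C_0$ by property~iii) is a sound and welcome bit of added detail, and your caveat about layers demanded only by $p_0$ is consistent with the paper's implicit assumption that $|X_j|\ge 2$.
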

%

\section{ Conclusion}

We have characterized the capacity region of achievable rates for multi-resolution streaming in peer-to-peer networks with upload capacity constraints, and shown that this region can be achieved by routing. This represents a new class of non-multicast network problems for which we have a capacity characterization. Although coding is not needed to achieve capacity in this scenario, it can nevertheless be useful in scenarios with losses or without centralized control.


\section{Appendix}
		
		\textit{Proof of Lemma 1:}
		~\\
		By Induction on $k$
		
		\textit{base case:} $k=C_0$
		$$ C_0+\sum_{i=1}^k C_i = k =C_0=outdeg(v_0)\ \Rightarrow \ \sum_{i=1}^k C_i =0 \Rightarrow$$
		$$ \forall i \in \lbrace 1,\ldots,k \rbrace\ outdeg (v_i)=C_i=0 $$
		Then a directed tree rooted at $v_0$ with $k$ leaves, namely $v_1, v_2, \ldots, v_k$, gives us the desired tree.
		
		Now we can assume that $C_1\le\ldots\le C_{k-1}\le C_k$ without loss of generality. Then $C_k\ge 1$, since it would be the base case otherwise.   By inductive hypothesis on $k-1$, there exists a directed tree rooted at $v_0$ with vertices  $v_1,\ldots,v_{k-1}$ such that
				$$ outdeg(v_0) = C_0, $$
				$$ \forall i \in \lbrace 1,\ldots,k-2 \rbrace\ outdeg (v_i)=C_i, \ indeg(v_i)=1, \ and $$
				$$ outdeg (v_{k-1})=(C_{k-1}+C_k-1), \ indeg(v_{k-1})=1 $$
		since
				$$ C_0+C_1+\ldots+C_{k-2}+(C_{k-1}+C_k-1) = k-1. $$
		Now, we add a new vertex, namely $v_k$, to this tree as an isolated node and then apply the following two changes in order:
		
		1) Remove arbitrarily selected $C_k$ outgoing edges from $v_{k-1}$ where $C_k \geq 1$, then add a new directed edge from $v_{k-1}$ to $v_k$, hence we now have $outdeg (v_{k-1})=C_{k-1}$ and $indeg(v_k)=1$.
		
		2) Let $D$ be the set of nodes that are disconnected from $v_{k-1}$ in the first step. Note that $D$ is not empty since $|D|=C_k \geq 1$. Now add a new directed edge from $v_k$ to each node in set $D$. Hence, $outdeg(v_k)=C_k$.
		
		By noting that the number of outgoing/incoming edges from/to the vertices in set $D$ have not been changed, the resulting directed tree rooted at $v_0$ with vertices $v_1, v_2, \ldots, v_k$ is exactly the one which is desired. This completes the induction and the proof of the lemma.\endproof

		~\\*

		\textit{Proof of Lemma 2:}
		~\\
		  		If $C_0$, $C_i$'s or $L$ are not integers we can divide the unit length to least common multiple of denominators to make them so. Thus, we will assume that they are integers in this proof.
		  		
		  		We will do induction on $L$. Base case $L=1$ is due to lemma \ref{routree} . Now let us assume that the claim is true for $L-1$. Since
		  		$$ \sum_{i=0}^k C_i \ge kL, $$
		  		we can find integers $\overline{C_0},\overline{C_1},\ldots,\overline{C_k}$ such that
		  		$$L-1\le \overline{C_0}\le C_S-1$$
		  		$$0\le \overline{C_1} \le C_1$$
		  		$$\vdots$$
		  		$$0\le \overline{C_k} \le C_k$$
		  		
		  		$$\sum_{i=0}^{k}\overline{C_i}=k(L-1) $$
		  		
		  		If we divide the data $x$ into $L$ unit length parts $x_1,x_2,\ldots,x_L$, by the induction hypothesis we can complete the distribution of parts $x_2$ through $x_L$ by using $\overline{C_0},\overline{C_1},\ldots,\overline{C_k}$ capacities. Then we will have capacities $C_0-\overline{C_0},C_1-\overline{C_1},\ldots,C_k-\overline{C_k}$ such that
		  		$$C_0-\overline{C_0}\ge 1$$
		  		$$ \sum_{i=0}^{k}(C_i-\overline{C_i})\ge k.$$
		  		Hence, we can construct a routing tree for $x_1$ by using the Lemma \ref{routree}, thereby finishing the distribution of complete data $x=x_1 x_2\ldots x_L$.\endproof
		  		
			~\\*

				\textit{Proof of Lemma 3:}
				~\\
				We will do induction on the number of resolutions $n$.
				
				\underline{\textit{Base Case}}: $n=1$ is stated in Lemma \ref{1res}.

				\underline{\textit{Inductive Step}}:
					For $j=n$ in (\ref{gv}), we have
						\begin{eqnarray}\label{good inequality for j=n}
						\sum_{p_i\in X_n}C_i -\sum_{i=1}^{n-1}L_i\ge |X_n|L_n.
						\end{eqnarray}
					
					Since we have inequalities (\ref{gv2}) and (\ref{good inequality for j=n}), there exist $0\le \overline{C_i}\le C_i$ for each $p_i\in X_n\setminus\lbrace p_0\rbrace$ and  $L_n\le\overline{C_0}\le C_0-\sum_{i=1}^{n-1}L_i $ for source such that
						\begin{eqnarray}\label{sum C^n_i bars+M bar=X_nL_n}
						\sum_{p_i\in X_n}\overline{C_i}-\sum_{i=1}^{n-1}L_i=|X_n|L_n.
						\end{eqnarray}
					Since we have (\ref{sum C^n_i bars+M bar=X_nL_n}) and $\overline{C_0}-\sum_{i=1}^{n-1}L_i \ge L_n$, by Lemma \ref{1res}, each peer in set $X_n$ can receive the resolution $x_n$ completely by using $\overline{C_i}$ capacity from each $p_i\in X_n$ including source. After completing the transmission of the resolution $x_n$, each peer $p_i \in X_n$ needs to receive only the resolutions $x_1, x_2, \ldots, x_{n-1}$ same as all the peers in $X_{n-1}$.
					
					If we denote the remaining capacity of a peer $p_i$ with $C_i'$ and of source with $C_0'$,  we may write
					$$
							 C_i'=
							\left\{
							\begin{array}{cl}
							 C_i & if \ p_i\in X_1\setminus X_n \\
							 C_i-\overline{C_i} & if \ p_i \in X_n \\
			
							\end{array}
							\right.
							$$
					Then, for each $j\in\lbrace 1,\ldots,n-1\rbrace$, following is true by using (\ref{gv}) and (\ref{sum C^n_i bars+M bar=X_nL_n}):
					$$\sum_{p_i\in X_j}C_i'-\sum_{i=1}^{j-1}L_i=-\sum_{i=1}^{j-1}L_i+\sum_{p_i\in X_j\setminus X_n}C_i+\sum_{p_i\in X_n}(C_i-\overline{C_i})$$
					$$=-\sum_{i=1}^{j-1}L_i+\sum_{p_i\in X_j}C_i-|X_n|L_n$$
					$$\ge\sum_{i=j}^n |X_i|L_i-|X_n|L_n=\sum_{i=j}^{n-1} |X_i|L_i.$$
					Hence we have, for resolutions $x_1,\ldots,x_{n-1}$, for every $j\in\lbrace 1,\ldots,n-1\rbrace$
					$$\sum_{p_i\in X_j}C_i'\ge\sum_{i=1}^{j-1}L_i + \sum_{i=j}^{n-1} |X_i|L_i$$
					and
					$$	C_0'\ge\sum_{i=1}^{n-1} L_i.$$
					Finally we can claim that these capacities are enough for transmission of $x_1,\ldots,x_{n-1}$ by the inductive hypothesis, thereby completing all transmission.\endproof

		~\\*
				\textit{Proof of Lemma 4:}
				
			 	For simplicity, we use the notation
			 	 $$H(S)=H(X_1\rightarrow S)$$
			 	to represent the entropy of all information $X_1\rightarrow S$ recieved by set $S$ and use analogous notation for other information theoretic quantities. Then
			 	$$\sum_{j=1}^{n}I(X_j\rightarrow (X_1\setminus X_j);x_j)\le\sum_{j=1}^{n}I(X_1\setminus X_j;x_j)= \sum_{j=1}^{n}\sum_{i=1}^{j-1} I(X_i\setminus X_{i+1};x_j|X_{i+1}\setminus X_{j})$$
			 	$$=\sum_{i=1}^{n-1}\sum_{j=i+1}^{n}I(X_i\setminus X_{i+1};x_j|X_{i+1}\setminus X_{j})$$
			 	Let us examine the inner sum:
			 	$$\sum_{j=i+1}^{n}I(X_i\setminus X_{i+1};x_j|X_{i+1}\setminus X_{j})=H(X_i\setminus X_{i+1})-H(X_i\setminus X_{i+1}|x_{i+1})+ \sum_{j=i+2}^{n}I(X_i\setminus X_{i+1};x_j|X_{i+1}\setminus X_{j})$$
			 	$$=H(X_i\setminus X_{i+1}|x_1,\ldots,x_i)+I(X_i\setminus X_{i+1};x_1,\ldots,x_i) -H(X_i\setminus X_{i+1}|x_1,\ldots,x_{i+1})-I(X_i\setminus X_{i+1};x_1,\ldots,x_i|x_{i+1})$$
			 	$$+\sum_{j=i+2}^{n}I(X_i\setminus X_{i+1};x_j|X_{i+1}\setminus X_{j})$$
			 	\textit{Since $x_1,\ldots,x_i$ are independent from $x_{i+1}$ and peers in set $X_i\setminus X_{i+1}$ can decode $x_1,\ldots,x_i$:}
			 	$$=H(X_i\setminus X_{i+1}|x_1,\ldots,x_i)+H(x_1,\ldots,x_i) -H(X_i\setminus X_{i+1}|x_1,\ldots,x_{i+1})-H(x_1,\ldots,x_i)$$
			 	$$+\sum_{j=i+2}^{n}I(X_i\setminus X_{i+1};x_j|X_{i+1}\setminus X_{j})$$
			 	$$= H(X_i\setminus X_{i+1}|x_1,\ldots,x_i) -H(X_i\setminus X_{i+1}|x_1,\ldots,x_{i+1})+\sum_{j=i+2}^{n}I(X_i\setminus X_{i+1};x_j|X_{i+1}\setminus X_{j})$$
			 	$$=H(X_i\setminus X_{i+1}|x_1,\ldots,x_i) -H(X_i\setminus X_{i+1}|x_1,\ldots,x_{i+1})+ \sum_{j=i+2}^{n}[H(X_i\setminus X_{i+1}|X_{i+1}\setminus X_{j})-H(X_i\setminus X_{i+1}|x_j,X_{i+1}\setminus X_{j})] $$
			 	$$=H(X_i\setminus X_{i+1}|x_1,\ldots,x_i)-H(X_i\setminus X_{i+1}|x_1,\ldots,x_{i+1})+H(X_i\setminus X_{i+1}|X_{i+1}\setminus X_{i+2})$$
			 	$$-\sum_{j=i+2}^{n}[H(X_i\setminus X_{i+1}|x_j,X_{i+1}\setminus X_{j})-H(X_i\setminus X_{i+1}|X_{i+1}\setminus X_{j+1})]-H(X_i\setminus X_{i+1}|x_n,X_{i+1}\setminus X_{n})$$
			 	$$\le H(X_i\setminus X_{i+1}|x_1,\ldots,x_i)\le \sum_{p_t\in X_i\setminus X_{i+1}}H(p_t|x_1,\ldots,x_i)=\sum_{p_t\in X_i\setminus X_{i+1}}[H(p_t)-I(p_t;x_1,\ldots,x_i)]$$
			 	$$=\sum_{p_t\in X_i\setminus X_{i+1}}H(p_t)-|X_i\setminus X_{i+1}|\sum\limits_{j=1}^{i}H(x_j)$$
			 	Hence we can write
			 	$$\sum_{j=1}^{n}I(X_j\rightarrow (X_1\setminus X_j);x_j)\le \sum_{i=1}^{n-1}\sum_{j=i+1}^{n}I(X_i\setminus X_{i+1};x_j|X_{i+1}\setminus X_{j})$$
			 	$$\le \sum_{i=1}^{n-1}\left(\sum_{p_t\in X_i\setminus X_{i+1}}H(p_t)-|X_i\setminus X_{i+1}|\sum\limits_{j=1}^{i}H(x_j)\right)=\sum_{p_i \in X_1\setminus \lbrace p_0 \rbrace}H(p_i) - \sum_{j=1}^{n} |X_j|H(x_j)$$
			 	$$=\sum_{p_i \in X_1\setminus \lbrace p_0 \rbrace}H(X_1\rightarrow p_i) - \sum_{j=1}^{n} |X_j|H(x_j)\le \sum_{p_i \in X_1\setminus \lbrace p_0 \rbrace}\sum_{p_l\in X_1}H(p_l\rightarrow p_i) - \sum_{j=1}^{n} |X_j|H(x_j)$$
			 	$$=\sum_{p_l\in X_1}\sum_{p_i \in X_1\setminus \lbrace p_0 \rbrace}H(p_l\rightarrow p_i) - \sum_{j=1}^{n} |X_j|H(x_j)\le \sum_{p_l\in X_1}C_l - \sum_{j=1}^{n} |X_j|H(x_j) $$
			 	Last inequality is due to \textit{property iii}.\endproof
		
		~\\*
		\textit{Proof of Lemma 5:}
						~\\
			 		$$A_j\ge N_j\Leftrightarrow$$
			 		$$\sum_{l=j}^n|X_l|H(x_l)-\sum_{l=j}^{n}\left(I(X_l\rightarrow (X_1\setminus X_l);x_l)+\sum_{p_i\in X_l}I(X_l\rightarrow p_i;x_l|X_l\rightarrow (X_1\setminus X_l))\right)$$
			 		$$\ge \sum_{l=j}^n |X_l|L_l+\sum_{l=1}^{j-1}L_l-\sum_{p_i\in X_j}C_i.$$
			 		 Then, all we have to prove is , putting $H(x_l)$ in place of $L_l$,
			 		
			 		$$\sum_{p_i\in X_j} C_i\ge\sum_{l=1}^{j-1}H(x_l) + \sum_{l=j}^{n} I(X_l\rightarrow (X_1\setminus X_l);x_l)+\sum_{l=j}^{n} \sum_{p_i\in X_l}I(X_l\rightarrow p_i;x_l|X_l\rightarrow (X_1\setminus X_l))$$
			 		~\\
			 		Following lemma will help us
			 		\begin{sublemma}
			 		For each $j$,$m\ge j$ and variables $A$,$B$
			 		$$\sum_{l=j}^{m}I(A;x_l|X_l\rightarrow X_1\setminus X_l,B)\le H(A).$$
			 		\end{sublemma}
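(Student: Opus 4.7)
The plan is to induct on the number of summands $t = m - j + 1$, using the mutual independence of the source layers $x_1, \ldots, x_n$ together with the network constraints (properties~ii and~iv) on the quantities $Y_l := X_l \to (X_1 \setminus X_l)$.

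For the base case $t = 1$ one has immediately $I(A; x_j \mid Y_j, B) \le H(A \mid Y_j, B) \le H(A)$. For the inductive step, the central algebraic identity to exploit is
\[
I(A; x_l \mid Y_l, B) \;=\; I(A, Y_l; x_l \mid B) - I(Y_l; x_l \mid B) \;\le\; I(A, Y_l; x_l \mid B),
\]
which absorbs the varying conditioning $Y_l$ into the target of the mutual information. Summing over $l$ and applying the mutual-information chain rule on the jointly independent $x_j, \ldots, x_m$ should then produce a bound of the form $I(\,\cdot\,; x_j, \ldots, x_m \mid B) \le H(\,\cdot\, \mid B)$, which we would like to push down to $H(A)$.

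The main obstacle is that the direct chain-rule estimate leaves a bound of the form $H(A, Y_j, \ldots, Y_m \mid B)$ rather than $H(A \mid B)$. Closing this gap requires showing that the extra $Y_l$'s carry no fresh randomness beyond $A$ together with the lower-index layers. This would follow from a Markov-type argument using property~ii ($Y_l$ is a deterministic function of the information received by the supernode $X_l$) together with property~iv (this received information decodes $x_1, \ldots, x_l$); together these imply that conditioning on $Y_l$ is at most as informative about $x_l$ as conditioning on the prefix $x_{<l}$, in which case the bound collapses to $H(A \mid B) \le H(A)$. Making this redundancy argument precise---possibly via a strengthened inductive hypothesis tracking the extra conditioning, or via an auxiliary telescoping between the $Y_l$'s and the received-information sets at $X_l$---is likely to be the most delicate step.
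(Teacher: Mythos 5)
There is a genuine gap, and you have correctly located it yourself: after $I(A;x_l\mid Y_l,B)\le I(A,Y_l;x_l\mid B)$ and the chain rule, the best you get is $H(A,Y_j,\ldots,Y_m\mid B)$, not $H(A)$. The fix you sketch, however, is not merely delicate but based on a claim that is false in general. You assert that $Y_l=X_l\to(X_1\setminus X_l)$ is ``at most as informative about $x_l$ as the prefix $x_{<l}$,'' which, since the layers are mutually independent, would make $Y_l$ essentially useless for $x_l$. But $Y_l$ is precisely the coded traffic that the peers in $X_1\setminus X_l$ relay on behalf of $x_l,\ldots,x_n$; it can and typically does carry fresh bits of $x_l$, so $I(Y_l;x_l\mid\cdot)$ is not controlled by $x_{<l}$. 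What properties~ii and~iv actually give is the reverse implication: $Y_l$ \emph{determines} $x_1,\ldots,x_{l-1}$ (peers in $X_{l-1}\setminus X_l$ decode from $X_1\to X_1\setminus X_l$, which is informationally equivalent to $Y_l$). So the conditioning you want to discard is not redundant, and the inductive route as proposed does not close.

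The paper avoids the issue by not dropping the conditioning at all. It expands $\sum_l I(A;x_l\mid Y_l,B)$ into a sum of entropy differences $H(A\mid Y_l,B)-H(A\mid x_l,Y_l,B)$, re-indexes so the positive and negative terms pair off with consecutive $l$'s, bounds a single boundary term by $H(A)$, and then shows each remaining paired difference $H(A\mid Y_{l+1},B)-H(A\mid x_l,Y_l,B)$ is $\le 0$. That last step is exactly where the two structural facts enter: property~ii lets you replace $Y_l$ by $X_1\to(X_1\setminus X_l)$, so that $Y_{l+1}$ corresponds to a \emph{larger} conditioning set than $Y_l$ (since $X_1\setminus X_l\subseteq X_1\setminus X_{l+1}$), and property~iv lets you add $x_l$ to the conditioning of the first term for free because it is decodable from $X_1\to(X_1\setminus X_{l+1})$. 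The nesting of the conditioning sets as $l$ varies is what makes the telescoping work; your approach strips this nesting out at the start and thereby loses the lemma.
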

			 			\begin{proof}
			 			$$\sum_{l=j}^{m}I(A;x_l|X_l\rightarrow X_1\setminus X_l,B)=$$
			 			$$\sum_{l=j}^{m}H(A|X_l\rightarrow X_1\setminus X_l,B)-\sum_{l=j}^{m} H(A|x_l,X_l\rightarrow X_1\setminus X_l,B)$$
			 			$$=H(A|X_j\rightarrow X_1\setminus X_j,B)+\sum_{l=j+1}^{m}H(A|X_l\rightarrow X_1\setminus X_l,B)$$
			 			$$-\sum_{l=j}^{m-1} H(A|x_l,X_l\rightarrow X_1\setminus X_l,B)-H(A|x_l,X_m\rightarrow X_1\setminus X_m,B)$$
			 			$$=H(A|X_j\rightarrow X_1\setminus X_j,B)+\sum_{l=j}^{m-1}H(A|X_{l+1}\rightarrow X_1\setminus X_{l+1},B)$$
			 			$$-\sum_{l=j}^{m-1} H(A|x_l,X_l\rightarrow X_1\setminus X_l,B)-H(A|x_l,X_m\rightarrow X_1\setminus X_m,B)$$
			 			$$\le H(A)+\sum_{l=j}^{m-1}[H(A|X_{l+1}\rightarrow X_1\setminus X_{l+1},B)- H(A|x_l,X_l\rightarrow X_1\setminus X_l,B)].$$
			 			Let us examine $X_l\rightarrow X_1\setminus X_l$:
			 			$$X_1\rightarrow X_1\setminus X_l=(X_l\rightarrow X_1\setminus X_l)\cup (X_1\setminus X_l\rightarrow X_1\setminus X_l).$$
			 			With a similar idea used in the proof of Theorem 2, we can claim that $X_1\setminus X_l\rightarrow X_1\setminus X_l$ is completely dependent on $X_l\rightarrow X_1\setminus X_l$. Thus $X_1\rightarrow X_1\setminus X_l$ is also completely dependent on $X_l\rightarrow X_1\setminus X_l$. Since $ X_l\rightarrow X_1\setminus X_l\subseteq X_1\rightarrow X_1\setminus X_l$, reverse is also true. Hence, we can use sets $ X_l\rightarrow X_1\setminus X_l$ and $X_1\rightarrow X_1\setminus X_l$ interchangeably in information theoretic expressions. Then,
			 			$$\sum_{l=j}^{m}I(A;x_l|X_l\rightarrow X_1\setminus X_l,B)\le$$
			 			$$H(A)+\sum_{l=j}^{m-1}[H(A|X_{l+1}\rightarrow X_1\setminus X_{l+1},B)- H(A|x_l,X_l\rightarrow X_1\setminus X_l,B)]=$$
			 			$$H(A)+\sum_{l=j}^{m-1}[H(A|X_1\rightarrow X_1\setminus X_{l+1},B)- H(A|x_l,X_1\rightarrow X_1\setminus X_l,B)]=$$
			 			$$H(A)+\sum_{l=j}^{m-1}[H(A|X_1\rightarrow X_l\setminus X_{l+1},X_1\rightarrow X_1\setminus X_l,B)- H(A|x_l,X_1\rightarrow X_1\setminus X_l,B)]=$$
			 			$$H(A)-\sum_{l=j}^{m-1}I(A;X_1\rightarrow X_l\setminus X_{l+1}|x_l,X_1\rightarrow X_1\setminus X_l,B)\le H(A).$$
			 						 			\end{proof}
			 		Now, let us define $r(p_i)$ such that $p_i\in X_{r(p_i)}\setminus  X_{r(p_i)+1}$.Then,
			 		$$\sum_{l=j}^{n}\sum_{p_i\in X_l}I(X_l\rightarrow p_i;x_l|X_l\rightarrow (X_1\setminus X_l))=
			 		  \sum_{p_i\in X_j}\sum_{l=j}^{r(p_i)}I(X_l\rightarrow p_i;x_l|X_l\rightarrow (X_1\setminus X_l))=$$
			 		$$\sum_{p_i\in X_j}\sum_{l=j}^{r(p_i)} I\left(\bigcup_{m=l}^{r(p_i)-1}\left((X_m\setminus X_{m+1})\rightarrow p_i\right)\cup (X_{r(p_i)}\rightarrow p_i);x_l|X_l\rightarrow X_1\setminus X_l\right)$$
			 		By the chain rule
			 		$$=\sum_{p_i\in X_j}\sum_{l=j}^{r(p_i)}\left( \sum_{m=l}^{r(p_i)-1}I(X_m\setminus X_{m+1}\rightarrow p_i;x_l|X_l\rightarrow X_1\setminus X_l,X_{m+1}\rightarrow p_i)+I((X_{r(p_i)}\rightarrow p_i);x_l|X_l\rightarrow X_1\setminus X_l) \right)= $$
			 		$$\sum_{p_i\in X_j}\left(\sum_{m=j}^{r(p_i)-1}\sum_{l=j}^{m}I(X_m\setminus X_{m+1}\rightarrow p_i;x_l|X_l\rightarrow X_1\setminus X_l,X_{m+1}\rightarrow p_i)+\sum_{l=j}^{r(p_i)}I(X_{r(p_i)}\rightarrow p_i;x_l|X_l\rightarrow X_1\setminus X_l)\right)$$
			 		by Sublemma 1( we also seperate $p_1\in X_n$ from other peers, and writing it in close form)
			 		$$\le \sum_{l=j}^{n}I(X_l\rightarrow p_1;x_l|X_l\rightarrow X_1\setminus X_l)+ \sum_{p_i\in X_j\setminus\lbrace p_1\rbrace }\left(\sum_{m=j}^{r(p_i)-1}H(X_m\setminus X_{m+1}\rightarrow p_i)+H(X_{r(p_i)}\rightarrow p_i)\right)$$
			 		Then
			 		$$\sum_{l=1}^{j-1}H(x_l) + \sum_{l=j}^{n} I(X_l\rightarrow (X_1\setminus X_l);x_l)+\sum_{l=j}^{n} \sum_{p_i\in X_l}I(X_l\rightarrow p_i;x_l|X_l\rightarrow (X_1\setminus X_l))\le$$
			 ~\\*		
			 		$$\sum_{l=1}^{j-1}H(x_l) + \sum_{l=j}^{n} I(X_l\rightarrow (X_1\setminus X_l);x_l)+\sum_{l=j}^{n}I(X_l\rightarrow p_1;x_l|X_l\rightarrow X_1\setminus X_l)+$$
			 		$$ \sum_{p_i\in X_j\setminus\lbrace p_1\rbrace }\left(\sum_{m=j}^{r(p_i)-1}H(X_m\setminus X_{m+1}\rightarrow p_i)+H(X_{r(p_i)}\rightarrow p_i)\right)=$$
			 ~\\*
			 ~\\*
			 		$$\sum_{l=1}^{j-1}H(x_l) + \sum_{l=j}^{n} [I(X_l\rightarrow (X_1\setminus X_l);x_l)+I(X_l\rightarrow p_1;x_l|X_l\rightarrow X_1\setminus X_l)]+$$
			 		$$ \sum_{p_i\in X_j\setminus\lbrace p_1\rbrace }\left(\sum_{m=j}^{r(p_i)-1}H(X_m\setminus X_{m+1}\rightarrow p_i)+H(X_{r(p_i)}\rightarrow p_i)\right)=$$
			 ~\\*
			 ~\\*
			 		$$\sum_{l=1}^{j-1}H(x_l) + \sum_{l=j}^{n} I(X_l\rightarrow X_1\setminus X_l,X_l\rightarrow p_1;x_l)+$$
			 		$$ \sum_{p_i\in X_j\setminus\lbrace p_1\rbrace }\left(\sum_{m=j}^{r(p_i)-1}H(X_m\setminus X_{m+1}\rightarrow p_i)+H(X_{r(p_i)}\rightarrow p_i)\right)=$$		
			 ~\\*
			 ~\\*
			 		$$\sum_{l=1}^{j-1}H(x_j) + \sum_{l=j}^{n} H(x_l)+ \sum_{p_i\in X_j\setminus\lbrace p_1\rbrace }\left(\sum_{m=j}^{r(p_i-1}H(X_m\setminus X_{m+1}\rightarrow p_i)+H(X_{r(p_i)}\rightarrow p_i)\right)=$$
			 ~\\*
			 ~\\*
			 		$$\sum_{l=1}^{n}H(x_l)+\sum_{p_i\in X_j\setminus\lbrace p_1\rbrace }\left(\sum_{m=j}^{r(p_i)-1}H(X_m\setminus X_{m+1}\rightarrow p_i)+H(X_{r(p_i)}\rightarrow p_i)\right)=$$
			 		$$H(X_1\rightarrow p_1)+\sum_{p_i\in X_j\setminus\lbrace p_1\rbrace }\left(\sum_{m=j}^{r(p_i)-1}H(X_m\setminus X_{m+1}\rightarrow p_i)+H(X_{r(p_i)}\rightarrow p_i)\right)\le$$
			 		$$H(X_j\rightarrow p_1)+H(X_1\setminus X_j\rightarrow p_1)+\sum_{p_i\in X_j\setminus\lbrace p_1\rbrace }\left(\sum_{m=j}^{r(p_i)-1}H(X_m\setminus X_{m+1}\rightarrow p_i)+H(X_{r(p_i)}\rightarrow p_i)\right)\le$$
			 		$$\sum_{p_l\in X_j}H(p_l\rightarrow p_1)+\sum_{p_l\in X_j}H(p_l\rightarrow X_1\setminus X_j)+\sum_{p_l\in X_j}\sum_{p_i\in X_j\setminus\lbrace p_1\rbrace }H(p_l\rightarrow p_i)\le$$
			 		$$\sum_{p_l\in X_j}\sum_{p_i\in X_1\setminus X_j}H(p_l\rightarrow p_i)+\sum_{p_l\in X_j}\sum_{p_i\in X_j}H(p_l\rightarrow p_i)=\sum_{p_l\in X_j}\sum_{p_i\in X_1}H(p_l\rightarrow p_i) \le \sum_{p_l\in X_j}C_l$$
			 		Last inequality is due to \textit{property iii}.\endproof
			 		
			 	~\\*	
		\textit{Proof of Lemma 6:}
								~\\
			Let $A_1=\{N_1, N_2, \ldots, N_{m-1}\}$ and $j_1= min\{r: N_r= max(A_1)\}$. Then, iteratively define $A_i=A_{i-1} \setminus \{N_{j_{i-1}}, N_{j_{i-1}+1}, \ldots, N_{m-1}\}$ and $j_i=min\{r: N_r= max(A_i)\}$ for $i>1$. Note that $\exists \ l \in \{1, 2, \ldots, m-1\}$ such that $j_l=1$ since $|A_i|$ is strictly decreasing and so is the sequence $\{j_i\}$. Moreover, note that $m>j_1>j_2>\ldots>j_l=1$ and $0 \ge N_{j_1}>N_{j_2}>\ldots>N_{j_l}=N_1$.
				\begin{sublemma}\label{sub_1}
				$$N-N_1 \ge \frac{|X_M|N}{|X_M|-1}$$
				\end{sublemma}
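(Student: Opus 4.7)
The plan is to reduce the claim to a single lower bound on $-N_1$ and then invoke hypothesis~(\ref{final2}) of Theorem~\ref{final}, which is in force throughout this proof. First I would verify that $|X_M|\ge 2$: if $|X_M|=1$, then by the nesting $X_{n+1}\subseteq\cdots\subseteq X_1$ we would have $X_i=\{p_0\}$ for every $i\ge M$, which combined with~(\ref{final1}) forces $N_M\le 0$, contradicting $M\in I$. With $|X_M|-1>0$, clearing denominators shows that the claim $N-N_1\ge\frac{|X_M|N}{|X_M|-1}$ is equivalent to the cleaner inequality $-N_1\ge\frac{N}{|X_M|-1}$.

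Next I would rewrite $-N_1$ directly from its definition (with $p_0\in X_1$) as $-N_1=\sum_{p_i\in X_1}C_i-\sum_{i=1}^n|X_i|L_i$, and substitute this into~(\ref{final2}) to obtain
$$-N_1\;\ge\;\sum_{i=1}^h \frac{N_{d_i}-N_{d_{i+1}}}{|X_{d_i}|-1},$$
where $N_{d_1}>\cdots>N_{d_{h+1}}$ is the dominant subsequence of $N_1,\ldots,N_{n+1}$. Because the dominant subsequence is strictly decreasing, every summand on the right is non-negative, so the sum is bounded below by any single term.

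The last step is to identify the final term explicitly. By definition of dominant subsequence, $d_{h+1}=n+1$ and $N_{n+1}=0$, while $d_h$ is the greatest index strictly less than $n+1$ satisfying $N_{d_h}>0$; this is exactly $\max\{i:N_i>0\}=M$. Hence the $i=h$ summand equals $\frac{N_M-0}{|X_M|-1}=\frac{N_M}{|X_M|-1}$. Dropping the other (non-negative) summands yields $-N_1\ge\frac{N_M}{|X_M|-1}$, and since $N=\min(N_m,N_M)\le N_M$, monotonicity of division by the positive quantity $|X_M|-1$ gives $-N_1\ge\frac{N}{|X_M|-1}$, which is the rearranged claim.

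The one step that requires care is confirming $d_h=M$ from the (slightly awkward, recursively-from-the-right) definition of dominant subsequence; once that identification is made, the sublemma is a one-line weakening of the hypothesis~(\ref{final2}) by discarding all summands except the last.
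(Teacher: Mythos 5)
Your proposal is correct and takes essentially the same route as the paper: both rearrange hypothesis~(\ref{final2}) to obtain $-N_1\ge\sum_{i=1}^h\frac{N_{d_i}-N_{d_{i+1}}}{|X_{d_i}|-1}$, drop all but the final term $\frac{N_M}{|X_M|-1}$ (using strict monotonicity of the dominant subsequence to see the discarded terms are non-negative), and then weaken $N_M$ to $N=\min(N_m,N_M)$. Your preliminary check that $|X_M|\ge 2$ is a sensible addition the paper leaves implicit.
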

				\begin{proof}
				Using $\sum\limits_{p_i \in X_1}C_i+C_s- \sum\limits_{i=1}^{n}|X_i|L_i=-N_1$, we get
					\begin{eqnarray*}
					\frac{N}{|X_M|-1} &\le& \frac{N_M}{|X_M|-1} \\
									  &\le& \sum_{j=1}^{h-1} \frac{N_{i_j}-N_{i_{j+1}}}{|X_{i_j}|-1}+\frac{N_M}{|X_M|-1} \ \\
									  &=& \sum_{j=1}^{h} \frac{N_{i_j}-N_{i_{j+1}}}{|X_{i_j}|-1} \ since \ i_h=M \ and \ N_{i_{h+1}}=0 \\
									  &\le& \sum_{p_j \in X_1}C_j+C_s-\sum_{j=1}^{n}|X_j|L_j \\
									  &=& -N_1
					\end{eqnarray*}
				Hence, we get 	$\frac{N}{|X_M|-1} \le -N_1$ which implies $N-N_1 \ge \frac{|X_M|N}{|X_M|-1}$, as desired.
				\end{proof}
				\begin{identity}\label{sub_2}
				If $j<t$,
				$$\sum_{p_i \in X_j \setminus X_t}C_i=N_t-N_j+\sum_{i=j}^{t-1}L_i(|X_i|-1)$$ holds.
				\end{identity}
				\begin{proof}
				Note that we have $\sum_{p_i \in X_j}C_i=\sum_{i=j}^{n}|X_i|L_i+\sum_{i=1}^{j-1}L_i-C_s-N_j$, $\sum_{p_i \in X_t}C_i=\sum_{i=t}^{n}|X_i|L_i+\sum_{i=1}^{t-1}L_i-C_s-N_t$ and $\sum_{p_i \in X_j \setminus X_t}C_i = \sum_{p_i \in X_j}C_i-\sum_{p_i \in X_t}C_i$. Hence, we get
					\begin{eqnarray*}
					\sum_{p_i \in X_j \setminus X_t}C_i &=& (\sum_{i=j}^{n}|X_i|L_i+\sum_{i=1}^{j-1}L_i-C_s-N_j)-		 (\sum_{i=t}^{n}|X_i|L_i+\sum_{i=1}^{t-1}L_i-C_s-N_t)\\
					&=&N_t-N_j+\sum_{i=j}^{t-1}L_i(|X_i|-1).
					\end{eqnarray*}
					This completes the proof of identity.
				\end{proof}
				
				We will give the proof of lemma in the following two complementary cases:
				\linebreak
				\linebreak
				\underline{\textit{Case 1}}: $\frac{|X_M|N}{|X_M|-1} \le N-N_{j_1}$  holds.
						
					By Identity \ref{sub_2}, we have
					\begin{eqnarray}\label{ineq}
					\sum_{p_i \in X_{j_1} \setminus X_m}C_i &=& N_m-N_{j_1}+\sum_{i=j_1}^{m-1}L_i(|x_i|-1) \notag\\
					&\ge& N-N_{j_1} \\
					&\ge& \frac{|X_M|N}{|X_M|-1} \notag
					\end{eqnarray}
					Hence, $\forall p_i \in X_{j_1}\setminus X_m \ \exists 0 \le C_{iM} \le C_i $ such that
						\begin{eqnarray}\label{eqn_1}
						\sum_{p_i \in X_{j_1} \setminus X_m}C_{iM}=\frac{|X_M|N}{|X_M|-1}.
						\end{eqnarray}
						Note that
						\begin{eqnarray}\label{ineq_1}
						\sum_{p_i \in X_j \setminus X_m}C_{iM} \le \sum_{p_i \in X_{j_1} \setminus X_m}C_{iM}=\frac{|X_M|N}{|X_M|-1} \le N-N_{j_1} \le N-N_j
						\end{eqnarray}
					$\forall j \in \{j_1, j_1+1,\ldots,m-1\}$ since $N_{j_1} \ge N_j \ \forall j \in \{j_1, j_1+1,\ldots,m-1\}$ by the choice of $j_1$. Now, we choose $C_{iM}=0, \ \forall p_i \in X_1 \setminus X_{j_1}$. Hence, usign $C_{iM}=0, \ \forall p_i \in X_1 \setminus X_{j_1}$ we get,
						\begin{eqnarray} \label{ineq_2}
						\sum_{p_i \in X_j \setminus X_m}C_{iM} = \sum_{p_i \in X_{j_1} \setminus X_m}C_{iM}=\frac{|X_M|N}{|X_M|-1} \le N-N_{j_1} \le N-N_j
						\end{eqnarray}
					$\forall j \in \{1, 2,\ldots,j_1-1\}$ since $N_{j_1} \ge N_j \ \forall j \in \{1, 2,\ldots,j_1-1\}$ by the choice of $j_1$. Hence, as can be seen from (\ref{eqn_1}), (\ref{ineq_1}) and (\ref{ineq_2}), this particular choice of $C_{iM} \ \forall p_i \in X_1 \setminus X_m$ is exactly as desired. This completes proof of \textit{Case 1}. \linebreak

				\underline{\textit{Case 2}}: $\frac{|X_M|N}{|X_M|-1} > N-N_{j_1}$  holds.
				
					Substituting $N_1=N_{j_l}$ in Sublemma, we get:
					$$N-N_{j_l} \ge \frac{|X_M|N}{|X_M|-1} > N-N_{j_1}$$ which implies that $\exists t \in \{1, 2,\ldots, l\}$ such that
					\begin{eqnarray}\label{choice of t}
					N-N_{j_{t+1}} \ge \frac{|X_M|N}{|X_M|-1} > N-N_{j_t}
					\end{eqnarray}
					holds. Now, we will give an iterative method for the choice of $C_{iM}, \ \forall p_i \in X_1\setminus X_m$ satisfying required constraints as follows:
					
					\textit{\underline{First Step}}: Choice of $C_{iM}, \ \forall p_i \in X_{j_1}\setminus X_m$
					
						By (\ref{ineq}), we have {$\sum_{p_i \in X_{j_1} \setminus X_m}C_i \ge N-N_{j_1}$}. Hence,
						$\forall p_i \in X_{j_1}\setminus X_m \ \exists C_{iM}$ with $0 \le C_{iM} \le C_i $ such that
							\begin{eqnarray}\label{eqn_2}
							\sum_{p_i \in X_{j_1} \setminus X_m}C_{iM}=N-N_{j_1}.
							\end{eqnarray}
						Note that	
							\begin{eqnarray}\label{ineq_3}
							\sum_{p_i \in X_j \setminus X_m}C_{iM} \le \sum_{p_i \in X_{j_1} \setminus X_m}C_{iM}=N-N_{j_1} \le N-N_j
							\end{eqnarray}
							$\forall j \in \{j_1, j_1+1,\ldots,m-1\}$ since $N_{j_1} \ge N_j \ \forall j \in \{j_1, j_1+1,\ldots,m-1\}$ by the choice of $j_1$.
					
					\textit{\underline{$r$-th step for $1<r\le t$ }}: Choice of $C_{iM}, \ \forall p_i \in X_{j_r}\setminus X_{j_{r-1}}$.
					
						By Identity \ref{sub_2}, we have
						\begin{eqnarray*}
						\sum_{p_i \in X_{j_r} \setminus X_{j_{r-1}}}C_i &=& N_{j_{r-1}}-N_{j_r}+\sum_{i=j_r}^{j_{r-1}}L_i(|x_i|-1) \\
										&\ge& N_{j_{r-1}}-N_{j_r}
						\end{eqnarray*}
						Hence,
						$\forall p_i \in X_{j_r}\setminus X_{j_{r-1}} \ \exists C_{iM}$ with $0 \le C_{iM} \le C_i $ such that
						\begin{eqnarray}\label{eqn_3}
						\sum_{p_i \in X_{j_r} \setminus X_{j_{r-1}}}C_{iM}=N_{j_{r-1}}-N_{j_r}.
						\end{eqnarray}
						Using (\ref{eqn_2}) and (\ref{eqn_3}), $\forall r \in \{2, 3,\ldots, t\}$ and $\forall j \in \{j_r, j_r+1,\ldots,j_{r-1}+1\}$ we have
							\begin{eqnarray}\label{ineq_4}
							\sum_{p_i \in X_{j} \setminus X_m}C_{iM} &\le& \sum_{p_i \in X_{j_r} \setminus X_m}C_{iM} \notag \\
							&=& \sum_{p_i \in X_{j_1} \setminus X_m}C_{iM}+\sum_{q=2}^{r}(\sum_{p_i \in X_{j_q} \setminus X_{j_{q-1}}}C_{iM})  \notag \\
							&=& (N-N_{j_1})+\sum_{q=2}^{r}(N_{j_{q-1}}-N_{j_q}) \notag \\
							&=& N-N_{j_r}  \notag \\
							&\le& N-N_j
							\end{eqnarray}
						since $N_{j_r}=max\{N_1, N_2, \ldots, N_{j_{r-1}+1}\}$ by the choice of $j_r$.
					
					\textit{\underline{$(t+1)$-th step}}: Choice of $C_{iM}, \ \forall p_i \in X_{j_{t+1}}\setminus X_{j_t}$ and $\forall p_i \in X_1\setminus X_{j_{t+1}}$, separately.
						
						Remember we have (\ref{choice of t}) by the choice of $t$. Using Identity \ref{sub_2} and (\ref{choice of t}), we get
						\begin{eqnarray*}
						\sum_{p_i \in X_{j_{t+1}} \setminus X_{j_t}}C_{iM} &\ge& N_{j_t}-N_{j_{t+1}} \\
						&=& (N-N_{j_{t+1}})-(N-N_{j_t})\\
						&\ge& \frac{|X_M|N}{|X_M|-1}-(N-N_{j_t})
						\end{eqnarray*}
						Hence $\forall p_i \in X_{j_{t+1}}\setminus X_{j_t}, \ \exists C_{iM}$ with $0 \le C_{iM} \le C_i $ such that
						\begin{eqnarray}\label{eqn_4}
						\sum_{p_i \in X_{j_{t+1}} \setminus X_{j_t}}C_{iM}=\frac{|X_M|N}{|X_M|-1}-(N-N_{j_t}).
						\end{eqnarray}
						Using (\ref{eqn_4}), we get
						\begin{eqnarray}\label{eqn_5}
						\sum_{p_i \in X_{j_{t+1}} \setminus X_m}C_{iM} &=& 	\sum_{p_i \in X_{j_{t+1}} \setminus X_{j_t}}C_{iM}+	\sum_{p_i \in X_{j_t} \setminus X_m}C_{iM} \notag \\
						&=&\frac{|X_M|N}{|X_M|-1}-(N-N_{j_t})+(N-N_{j_t}) \notag \\
						&=& \frac{|X_M|N}{|X_M|-1}
						\end{eqnarray}
						Now, choose $C_{iM}=0$, $\forall p_i \in X_1\setminus X_{j_{t+1}}$. Using (\ref{eqn_5}) and the choice of $C_{iM}=0$, $\forall p_i \in X_1\setminus X_{j_{t+1}}$, we get for any $j \in \{1, 2, \ldots, j_{t+1}-1\}$ that
						\begin{eqnarray}\label{eqn_6}
						\sum_{p_i \in X_j \setminus X_m}C_{iM}&=&\sum_{p_i \in  X_j\setminus X_{j_{t+1}}}C_{iM}+\sum_{p_i \in X_{j_{t+1}} \setminus X_m}C_{iM} \notag \\
						&=&0+\frac{|X_M|N}{|X_M|-1} \notag \\
						&=&\frac{|X_M|N}{|X_M|-1}
						\end{eqnarray}
						Using (\ref{eqn_6}), we get for any $j \in \{1, 2, \ldots, j_t-1\}$ that
						\begin{eqnarray}\label{ineq_5}
						\sum_{p_i \in X_j \setminus X_m}C_{iM} &\le& \sum_{p_i \in X_{j_{t+1}} \setminus X_m}C_{iM} \notag \\
						&=&\frac{|X_M|N}{|X_M|-1} \notag \\
						&\le& N-N{j_{t+1}} \notag \\
						&\le& N-N_j
						\end{eqnarray}
						since $N_{j_{t+1}}=max\{N_1, N_2, \ldots, N_{j_t-1}\}$ by the choice of $j_{t+1}$. Putting $j=1$ in (\ref{eqn_6}) gives
						\begin{eqnarray}\label{eqn_7}
						\sum_{p_i \in X_1 \setminus X_m}C_{iM}=\frac{|X_M|N}{|X_M|-1}
						\end{eqnarray}
						Finally, considering (\ref{ineq_3}), (\ref{ineq_4}) and (\ref{ineq_5}), this particular choice of $C_{iM}$ for each peer $p_i \in X_1 \setminus X_m$ satisfies
						$$\sum_{p_i \in X_j\setminus X_m} C_{iM} \le N-N_j$$ $\forall j\in\{1, 2, \ldots, m-1\}$ and by (\ref{eqn_7})
						$$\sum_{p_i \in X_1\setminus X_m} C_{iM} = \frac{|X_M|N}{|X_M|-1},$$ as desired. This completes the proof of Lemma \ref{Choice of C_iM}.\endproof

\end{document}